\theoremstyle{plain}
\theoremstyle{definition}
\newtheorem{remark}{Remark}
\newtheorem{example}{Example}
\newtheorem{proposition}{Proposition}
\DeclareMathOperator*{\argmin}{arg\,min}
\newcommand{\I}{\mathbbm{1}}
\numberwithin{equation}{section}
\providecommand{\tabularnewline}{\\}
\definecolor{codegreen}{rgb}{0,0.6,0}
\definecolor{codegray}{rgb}{0.5,0.5,0.5}
\definecolor{codepurple}{rgb}{0.58,0,0.82}
\definecolor{backcolour}{rgb}{0.95,0.95,0.92}
\lstdefinestyle{mystyle}{
    backgroundcolor=\color{backcolour},   
    commentstyle=\color{codegreen},
    keywordstyle=\color{magenta},
    numberstyle=\tiny\color{codegray},
    stringstyle=\color{codepurple},
    basicstyle=\ttfamily\footnotesize,
    breakatwhitespace=false,         
    breaklines=true,    captionpos=b,                    
    keepspaces=true, numbers=left,                    
    numbersep=5pt, showspaces=false,                
    showstringspaces=false, showtabs=false, tabsize=2}
\definecolor{myred}{rgb}{0.6000000, 0.1058824, 0.1176471}
\definecolor{mypurple}{rgb}{0.4588235, 0.05882352, 0.4274509}
\def\maxwidth{ %
  \ifdim\Gin@nat@width>\linewidth
    \linewidth
  \else
    \Gin@nat@width
  \fi
}
\definecolor{fgcolor}{rgb}{0.345, 0.345, 0.345}
\definecolor{shadecolor}{rgb}{.97, .97, .97}
\definecolor{messagecolor}{rgb}{0, 0, 0}
\definecolor{warningcolor}{rgb}{1, 0, 1}
\definecolor{errorcolor}{rgb}{1, 0, 0}
\title{
    \Large Robust Estimation of Regression Models with Potentially Endogenous Outliers via a Modern Optimization Lens 
}
\author{
    {Zhan Gao}\thanks{
       Department of Economics, University of Southern California.
       \texttt{zhangao@usc.edu}.
    } 
    \and 
    Hyungsik Roger Moon\thanks{
        Department of Economics, University of Southern California.
        \texttt{moonr@usc.edu}.
    }
}
\begin{document}
\maketitle
\begin{abstract}
    This paper addresses the robust estimation of linear regression models in the presence of potentially endogenous outliers. Through Monte Carlo simulations, we demonstrate that existing $L_1$-regularized estimation methods, including the Huber estimator and the least absolute deviation (LAD) estimator, exhibit significant bias when outliers are endogenous. Motivated by this finding, we investigate $L_0$-regularized estimation methods. We propose systematic heuristic algorithms, notably an iterative hard-thresholding algorithm and a local combinatorial search refinement, to solve the combinatorial optimization problem of the \(L_0\)-regularized estimation efficiently. Our Monte Carlo simulations yield two key results: (i) The local combinatorial search algorithm substantially improves solution quality compared to the initial projection-based hard-thresholding algorithm while offering greater computational efficiency than directly solving the mixed integer optimization problem. (ii) The $L_0$-regularized estimator demonstrates superior performance in terms of bias reduction, estimation accuracy, and out-of-sample prediction errors compared to $L_1$-regularized alternatives. We illustrate the practical value of our method through an empirical application to stock return forecasting.
    \\
    \\
    {\bf Key words}: Robust estimation, mixed integer optimization, outlier detection, endogeneity \\
    {\bf JEL code}: C13, C53, C61, C63
\end{abstract}

 \thispagestyle{empty}
\newpage
\setcounter{page}{1}

\section{Introduction\label{sec:intro}}

Robust estimation of linear regression models in the presence of outliers has been extensively studied in econometrics and statistics. It is well-known that the classical ordinary least square (OLS) estimator is extremely sensitive to outliers. Various robust estimation methods have been proposed, for example, Huber's M-estimation \citep{huber1964robust}, the least median of squares (LMS) estimator \citep{siegel1982robust}, and the least trimmed squares (LTS) estimator \citep{rousseeuw:1984:lms_lts,rousseeuw1985multivariate}, among others. More recently, 
\citet{lee2012regularization} propose regularization of the \(L_1\)-norm of case-specific parameters to achieve robust estimation against outliers and \citet{she2011outlier} show that any regularized estimator can be formulated as an equivalent iterative thresholding procedure of outlier detection. Refer to \citet{yu2017robust} for a comprehensive survey and detailed comparison.

The traditional literature has focused on the breakdown point and efficiency of a robust estimator \citep{huber1983notion}. The finite sample breakdown point of an estimator measures the proportion of outliers that can be arbitrarily contaminated before the estimation error goes to infinity. On the other hand, efficiency measures the relative estimation efficiency of the robust estimator compared to OLS in the ideal scenario where the error term is normally distributed and there are no outliers. However, there is limited work investigating how robust estimators perform when the outliers are \textit{endogenous} in the sense that the noises can be arbitrarily correlated with observed regressors. In this scenario, an outlier not only brings in a contamination noise of large magnitude but also introduces model misspecification against the classical linear regression assumptions. The endogeneity issues from the outliers can lead to severe estimation bias.

In this paper, we first examine the widely adopted \(L_1\)-regularized estimator, which includes the Huber estimator and least absolute deviation as special cases, and demonstrate that it is subject to significant bias when outliers are endogenous. As shown in \citet{she2011outlier}, the \(L_1\)-regularized estimation is equivalent to an iterative soft-thresholding procedure with a data-driven thresholding parameter, and hence it does not completely eliminate the detected outlier. 

Motivated by this finding, we turn to the \(L_0\)-regularized approach. By restricting the cardinality of the set of outliers, the \(L_0\)-regularized estimator searches for the best subset of observations and the regression coefficients that minimize the least squares objective function. This problem is equivalent to solving the least trimmed squares (LTS) estimator, which is well-known to be an NP-hard combinatorial optimization problem \citep{natarajan:1995:nphard}. Following \citet{bertsimas/king/mazumder:2016:bestsubsetmio,thompson2022robust}, the \(L_0\)-regularized estimator can be formulated as a mixed integer optimization (MIO) problem. With the continuous advancement in modern optimization software, such as \texttt{gurobi}\footnote{\texttt{gurobi} is a commercial optimization solver that specializes in mixed integer optimization. In the past decade, \texttt{gurobi} has been up to 75-fold faster in computation speed for mixed integer optimization problems independent of hardware advancement \citep{gurobi}.}, and computational infrastructure, it is now tractable to solve the \(L_0\)-regularized estimation problem with a verifiable global optimal solution for datasets with up to hundreds of observations. However, the optimization routine heavily relies on the initial values because of its non-convexity nature. In addition, the implementation is not scalable, and the computational burden is increasingly heavy as the sample size and fraction of outliers grow.

To address the computational challenges, we propose a scalable and efficient heuristic algorithm that combines the iterative hard-thresholding (IHT) algorithm and a local combinatorial search refinement inspired by \citet{hazimeh/mazumder:2020:fastsubset}. Given the initial solution by the IHT algorithm, the local combinatorial search algorithm checks whether swapping a few observations, say one or two, between the estimated set of inliers and outliers, can improve the objective value by solving a small-scale mixed integer optimization problem. The heuristic algorithm is more computationally efficient and can improve the initial IHT solution to be as good as the solution from solving the original \(L_0\)-regularized MIO problem in most cases, according to the Monte Carlo experiments.

Our contributions are twofold. First, we document that existing \(L_1\)-regularized methods are biased in the presence of endogenous outliers, whereas the \(L_0\)-regularized approach does not suffer from this bias through Monte Carlo simulations. This finding extends the understanding of the properties of robust estimation methods to a new scenario. Second, we propose systematic heuristic algorithms that provide stable and high-quality solutions while being computationally efficient. To the best of our knowledge, this is the first work to apply the idea of local combinatorial search to the context of robust estimation and outlier detection.

To illustrate the practical value of our method, we apply it to an empirical application in stock return forecasting. The results demonstrate that our \(L_0\)-regularized estimator outperforms \(L_1\)-regularized alternatives in terms of out-of-sample prediction errors.

\bigskip

\noindent {\it Notations}. For a generic vector $a =  \left( a_1, a_2, \cdots, a_N \right)^{\prime}$, $\left\Vert a \right\Vert = \left( a^{\prime} a \right)^{1/2}$, $ \left\Vert a \right\Vert_1 = \sum_{i=1}^N \left\vert a_i \right\vert $, \(\left\lVert a \right\rVert_\infty = \max_i \vert a_i \vert \)  and \(\left\lVert a \right\rVert _{0} = \sum_{i=1}^{N} \I \left\{ a_i \neq 0 \right\}  \). 
For matrix $A$, we define the Frobenius matrix norm 
$\left\Vert A  \right\Vert$ as $\left\Vert A \right\Vert = \left( \operatorname{tr}\left( A^{\prime}A \right) \right)^{1/2}$. 
Generically, $[N] = \left\{ 1, 2, \cdots, N \right\}$ for positive integer $N$.
For \(a \in \mathbb{R} \), \(
    \lfloor a \rfloor   
\) is the integer part of the real number \(a\).
\(\iota_N = \left( 1,1,\cdots, 1 \right)\in \mathbb{R} ^N \) denotes the one vector. 

\bigskip

The rest of the paper is organized as follows. Section \ref{sec:model} presents the model setup
and motivating examples. Section \ref{sec:L1_reg} summarizes the existing \(L_1\)-regularized methods for robust estimation. Section \ref{sec:L0_reg_est} delineates the \(L_0\)-regularized robust methods and the detailed algorithms. The performance of the algorithms is evaluated through Monte Carlo experiments in \ref{sec:mc}. Section \ref{sec:empirical} illustrates the proposed method in an empirical application of stock return forecasting. Section \ref{sec:conclusion} concludes.

\section{Linear Regression with Potentially Endogenous Outliers\label{sec:model}}
This section sets up the linear regression models with potentially endogenous outliers. Suppose we observe samples $(y_i,x_i')'$ for individual unit $i \in 
\left[ N \right] $. 
Let $\mathcal{O}\subset \left[ N \right]$ and   
$\mathcal{I} = \left[ N \right]  \setminus \mathcal{O}$. 
Consider the following linear regression model:
\begin{equation}
	y_i = \beta_0 + x_i^\prime \beta_{1} + \alpha_i + u_i \label{eq:model.outliers}
\end{equation}
where $\mathbb{E}(u_i| x_i) = 0$ for all $i \in \left[ N \right] $. The parameter $\alpha_i$ represents the conditional mean shift such that 
\begin{align*}
	\alpha_i &= 0 \quad {\rm if} \,\,  i \in \mathcal{I} \\
	\mathbb{E}(\alpha_i | x_i) &\neq 0 \quad {\rm if} \,\,  i \in \mathcal{O}. 
\end{align*}
We refer to the parameters $\alpha_i$ as outlier fixed effects, which are allowed to be arbitrarily correlated with the regressors $x_i$. 
An alternative formulation of model \eqref{eq:model.outliers} is:
\begin{equation}
    \label{eq:model.outliers.with.gamma}
    y_i = \beta_0 + x_i^\prime \beta_{1} + \gamma_i \widetilde{\alpha} _i + u_i,
\end{equation}
where $\gamma_i = 
\I\left\{ i \in \mathcal{O} \right\}$ is the outlier dummy and \(\widetilde{\alpha} \) is the latent outlier fixed effect. 
\begin{remark}
    $\gamma_i$ and $\alpha_i$ are generally allowed to be correlated. For example, $\gamma_i = \I\left( \widetilde{\alpha} _i > \overline{a}  \right)$, 
    i.e., the shock \(\widetilde{\alpha}_{i}  \) affects the dependent variable only if it is large enough to pass the threshold \(\overline{a} \). Conversely, $\gamma_i = \I\left( \widetilde{\alpha} _i < \overline{a}  \right)$ implies that the error affects the dependent variable only if it is small enough to avoid detection during the data collection process.
\end{remark}
The existence of \(\alpha _{i} \) introduces endogeneity to a subset of observations \(\mathcal{O} \). The classical ordinary least squares (OLS) estimator is expected to be biased in the presence of endogenous outliers. The following motivating examples demonstrate possible sources of endogenous outliers.

\begin{example}[Heterogeneous Coefficients] 
Suppose that for $i \in \mathcal{I}$, 
\[
	y_i = \beta_0  + \beta_1' x_i + u_i,
\]
while for $i \in \mathcal{O}$, 
\begin{equation*}
	y_i = \beta_{0,i} + \beta_{1,i}' x_i + u_i,
\end{equation*}
where the coefficients $\beta_{0,i}$ and $\beta_{1,i}$ are heterogeneous across \(i \in \mathcal{O}\) and potentially correlated with \(x_i\). 
Define 
\begin{align*} 	
	\alpha_{0,i} &:= (\beta_{0,i} - \beta_0) \I( i \in \mathcal{O}), \\
	\alpha_{1,i} &:= (\beta_{1,i} - \beta_1) \I( i \in \mathcal{O}).
\end{align*}
Then, for $i \in \left[ N \right]$,
\begin{equation*}
        y_i = \beta_0 + \beta _1 ^{\prime} x _{i} + \alpha_{0,i} + \alpha_{1,i}^{\prime}  x_i + u_i, 
        \label{eq:hetero_coef_example}
\end{equation*} 
which can be reformulated as \eqref{eq:model.outliers} with 
\begin{equation*}
    \alpha_i = \begin{cases}
        0 & {\rm for} \;\; i \in \mathcal{I}, \\
        \alpha_{0, i} +  \alpha_{1,i} ^{\prime} x_i & {\rm for} \;\; i \in 
        \mathcal{O},
    \end{cases}
\end{equation*}
where \(\mathbb{E} \left( \alpha_i | x_i \right) = \mathbb{E} \left( 
    \alpha_{0,i} | x_i \right)  + \mathbb{E} \left( \alpha_{1,i} | x_i \right) ^{\prime} x _{i} 
\neq 0 \) for \(i \in \mathcal{O}\) in general.

\end{example}
\begin{example}[Measurement Errors]
	Suppose that $y_i$ and $x_i^*$ are the outcome variable and the regressors 
	of interest, respectively. Assume that there exists a linear relationship 
	between $y_i$ and $x_i^*$ as
		\[
			y_{i} = \beta_0 + \beta_1'x_i^* + u_i,
		\]
	where $\mathbb{E}(u_i | x_i^*) = 0$.
	However, measurement errors can contaminate the variables during data collection and processing. We observe 
	$x_i^*$ only for $i \in \mathcal{I}$, and for $i \in \mathcal{O}$, we 
	observe proxy variables with measurement errors, $x_i := x_i^* + 
	\alpha_{1,i}$, where $\alpha_{1,i}$ represents measurement errors. 
	Then, the regression model becomes 
    \begin{equation*}
        y_i = \beta_0 + x_i^\prime \beta_1 + \alpha_i + u_i,
    \end{equation*}
    where \begin{equation*}
        x_i = \begin{cases}
            x_i^* & {\rm for}\;\; i \in \mathcal{I},\\
            x_i^* + \alpha_{1,i} & {\rm for}\;\; i \in \mathcal{O},
        \end{cases}\quad \quad \alpha_i = \begin{cases}
            0 & {\rm for}\;\; i \in \mathcal{I},\\
            -\alpha_{1,i}^\prime \beta_1 & {\rm for}\;\; i \in \mathcal{O}.
        \end{cases} 
    \end{equation*}
    In the case of nonclassical measurement errors, $\mathbb{E}(\alpha_i | x_i) \neq 0$ in general, which induces endogeneity to a subset of observations \(\mathcal{O} \).
\end{example}

The objective is to accurately estimate $\beta = (\beta_0,\beta_1')'$ in the presence of outliers in the samples.

\section{Existing Methods: $L_1$-regularized Robust Regression\label{sec:L1_reg}}

The most widely used robust estimators of $\beta$ in the presence of outlier observations are Huber's M-estimation and the least absolute deviation (LAD) estimation method \citep{huber1964robust}. These two estimators can be understood as special cases of the $L_1$-regularized estimator. 

Denote $Y = (y_1,...,y_N)', X = \left((1,x_1'),...,(1,x_N')\right)', \beta = \left( \beta_0, \beta_1^\prime \right)^\prime, \alpha = (\alpha_1,...\alpha_N)', U = (u_1,...,u_N)'$. Then, we can present the model (\ref{eq:model.outliers}) in matrix form as 
\begin{equation}
    \label{eq:model.outliers.matrix}
    Y = X\beta + \alpha + U.
\end{equation}
Let the least squares loss function be
\begin{equation*}
    L_N \left( \beta, \alpha \right) = \frac{1}{2} \left\Vert Y - X\beta - \alpha \right\Vert ^2,
\end{equation*} 
and consider the following $L_1$-regularized optimization problem,
\begin{equation}
    \min_{\beta, \alpha} Q_N^\psi\left( \beta, \alpha \right) = L_N\left( 
    \beta, \alpha \right) + \psi \left\Vert \alpha \right\Vert_1. \label{eq:l1_prob}
\end{equation}
Given $\beta$, \(
    \hat{\alpha}^{\psi}(\beta) \coloneq \argmin_\alpha Q_{N}^{\psi}(\beta, 
    \alpha).
\) 
The closed-form solution is
\begin{equation}
    \hat{\alpha}_{i}^{\psi}(\beta) = \begin{cases}
        \left( y_i - x_i^\prime \beta \right) - \psi & \text{if }  y_i - 
        x_i^\prime \beta \geq \psi, \\
        0 & \text{if } \left\vert  y_i - x_i^\prime \beta \right\vert < \psi, \\
        \left( y_i - x_i^\prime \beta \right) + \psi & \text{if }  y_i - 
        x_i^\prime \beta \leq -\psi.
    \end{cases} \label{eq:closed.form.alpha}
\end{equation}
Then the profile $L_1$-regularized objective function becomes
\begin{equation}
        Q_N^\psi\left( \beta, \hat{\alpha}^\psi\left( \beta \right) \right)  = 
    \sum_{i=1}^{N}\left[\frac{1}{2}\left(y_{i}-\beta^{\prime} x_{i}\right)^{2} 
    \I\left\{\left|y_{i}-\beta^{\prime} x_{i}\right| \leq 
    \psi\right\}+\left(\psi\left|y_{i}-\beta^{\prime} 
    x_{i}\right|-\frac{\psi^{2}}{2}\right) 
    \I\left\{\left|y_{i}-\beta^{\prime} 
    x_{i}\right|>\psi\right\}\right].\label{eq:L1-objective}
\end{equation}

\begin{figure}[h]
    \centering
    \includegraphics[width=0.5\textwidth]{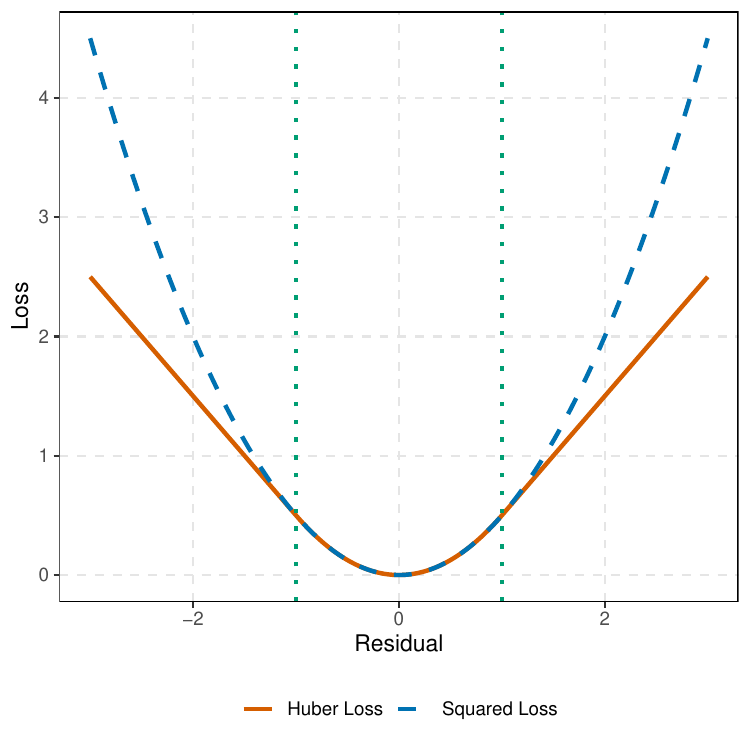}
    \caption{The $L_1$-regularized loss function\label{fig:L1-loss}}
\end{figure}

In \eqref{eq:closed.form.alpha}, the tuning parameter \(\psi \) 
plays the role of a soft thresholding parameter. 
When \(\psi\)  is fixed, \eqref{eq:L1-objective} becomes the objective function of Huber regression
\begin{equation*}
    Q_N^\psi\left( \beta, \hat{\alpha}^\psi\left( \beta \right) \right)  = \sum_{i=1}^{N} \rho_{\psi} \left( y_i - \beta^\prime x_i \right),
\end{equation*}
where \begin{equation}
    \rho_{\psi}(t) = \begin{cases}
      \frac{1}{2} t^2 & \text{if } |t| \leq \psi, \\
      \psi |t| - \frac{1}{2} \psi^2 & \text{if } |t| > \psi,
    \end{cases} \label{eq:huber_loss}
\end{equation}
is the Huber loss function \citep[p. 26 - 27]{hastie/tibshirani/wainwright2015}. 
As illustrated in Figure \ref{fig:L1-loss}, the cutoffs \(-\psi \) and  \(\psi \) 
divide the Huber loss function into two regimes: 
it is a squared loss if \(|t| \leq \psi \) and a linear loss in absolute deviation if \(|t| > \psi \). 
If we let \(\psi = \psi _N\) so that $\min_{i\in \left\{ 1,2,\cdots, N \right\}} 
\left\vert y_i - x_i^{\prime} \beta \right\vert \geq \psi_{N}  > 0$ given \(N\) and the sample \(\left( Y, X\right) \) , then we have 
\begin{equation*}
    \left( N\psi_{N}  \right)^{-1} Q_N^\psi\left( \beta, \hat{\alpha}^\psi \left( 
    \beta \right) \right) \to \frac{1}{N} \left\vert y_i - x_i^\prime \beta 
    \right\vert,
\end{equation*}
as $\psi_{N} \to 0$, which shrinks the squared loss region in Figure \ref{fig:L1-loss}.

The $L_1$-regularized estimator unifies the Huber and LAD estimators and offers greater flexibility by allowing the regularization parameter to be chosen in a data-driven manner. However, as a soft-thresholding algorithm, as shown in \eqref{eq:closed.form.alpha}, it does not completely eliminate the detected outliers from the estimation procedure. This can lead to estimation biases when the outliers are endogenous. 

As demonstrated in the Monte Carlo experiments in Section \ref{sec:mc}, the $L_1$-regularized estimator performs well in terms of estimation bias, root mean squared error (RMSE), and prediction error in DGP 1, where the outlier fixed effects are generated as exogenous random variables. However, in DGP 2 and DGP 3, where the outlier fixed effects are correlated with the regressors, the $L_1$-regularized estimator for the slope coefficients exhibits significant biases. This finding motivates the consideration of the hard-thresholding-based $L_0$-regularized estimator, as discussed in Section \ref{sec:L0_reg_est}.

\section{$ L_0 $-regularized Robust Regression\label{sec:L0_reg_est}}
We consider the $L_0$-reguarization on the outlier fixed effects in the least square estimation,
\begin{equation}
    \label{eq:L0_prob}
    \min_{\beta, \alpha} L_N\left( \beta, \alpha \right) \text{ s.t. } \left\Vert \alpha \right\Vert_0 \leq k, 
\end{equation}
where \(
    \left\lVert \alpha  \right\rVert _{0}  = \sum_{i=1}^{N} \I\left\{ \alpha _{i} \neq 0 \right\} 
\) and the tuning parameter $k\in \left[ N \right] $ controls the exact sparsity of the outlier fixed effects $ \alpha $.

\begin{remark}
    The Lagrangian form of the regularized estimation problem 
    \begin{equation}
        \label{eq:penalized_ls_prob}
        \min_{\beta, \alpha} L_N\left( \beta, \alpha \right) + P_\psi\left( \alpha \right),
    \end{equation}
    where $ P_\psi \left( \alpha \right) $ is the penalty function, is often studied in the literature, for example, by \citet{she2011outlier} and \citet{lee2012regularization}. When $ P_\psi \left( \alpha \right) = \psi\left\Vert \alpha \right\Vert_1 $, \eqref{eq:penalized_ls_prob} is equivalent to classical Huber regression or least absolute deviation (LAD) estimation, depending on the tuning parameter $\psi$, which is detailed in Section \ref{sec:L1_reg}. \citet{she2011outlier} show that the iterative procedure of outlier detection based on hard-thresholding, $ \Theta_{\operatorname{hard }}^{\psi } (x)= \begin{cases}0, & \text { if }|x| \leq \psi  \\ x, & \text { if }|x|> \psi \end{cases} $, gives a local minimum of the $L_0$-regularized optimization,
    \begin{equation}
        \label{eq:L0_penalize}
        \min_{ \beta, \alpha} L_N\left( \beta, \alpha \right) + \psi \left\Vert  \alpha \right\Vert_0.
    \end{equation}
    Note that \eqref{eq:L0_prob} and \eqref{eq:L0_penalize} are not equivalent since it is not guaranteed that there exists a corresponding $ \psi > 0 $ for each $ k \in \left[ N \right]  $. To see this, denote $ \psi \left( k \right) =  \min_{
        \left\Vert \beta \right\Vert_0 \leq k
    } \frac{1}{2} \left\Vert y -  X\beta \right\Vert_2^2$, and 
    $ 
    \delta\left( k \right) = \left\vert 
        \psi \left( k + 1 \right) - \psi \left( k \right)
    \right\vert 
    $. Note that $ \psi \left( k \right) $ is strictly decreasing in $ k $ unless the model perfectly fits the data. Suppose $ \delta \left( k \right) $ is decreasing, then for any $ k $, there exists $ \psi \in \left( \delta \left( k + 1 \right), \delta \left( k \right) \right) $ such that the corresponding $ L_0 $-penalized problem gives the same solution as \eqref{eq:L0_prob}. However, $ \delta \left( k \right) $ is not guaranteed to be decreasing. For any local minimum $ \tilde{k} $, there is no corresponding $ \psi \in \mathbb{R}_{+} $ that yields the same solution. Scenarios like cointegration can serve as examples. In this paper, we focus on the constrained form of $ L_0 $-regularization \eqref{eq:L0_prob}, which allows one to control the exact sparsity of $ \alpha $.
\end{remark}

As in \citet{bertsimas/king/mazumder:2016:bestsubsetmio,thompson2022robust}, \eqref{eq:L0_prob} can be formulated as the following mixed integer optimization (MIO) problem,
\begin{align}
    & \min_{\beta, \alpha, \gamma} L_N\left( \beta, \alpha \right)\nonumber \\
    \text{s.t. } &  
    \gamma_i \in \left\{ 0, 1 \right\},\, i \in \left[ N \right] \nonumber \\
    & \sum_{i=1}^N \gamma_i\leq k,\label{eq:mio_prob} \\
    & \left\{\alpha_i, 1 - \gamma_i  \right\} \in \text{SOS-1},\, i \in \left[ N \right],  \nonumber\\
    & \left\lVert \alpha  \right\rVert_{\infty} \leq   M_{\alpha},\, \left\lVert \alpha \right\rVert_1 \leq M_{\alpha, 1}, \nonumber
\end{align}
in which we introduce a binary variable \(\gamma _{i} \) to model whether an observation is detected as an outlier. \(\sum_{i=1} ^{N} \gamma _{i} \leq k \) corresponds to the cardinality constraint \(\left\lVert \alpha  \right\rVert _0 \leq k \). If \(\gamma _{i}  = 1\), i.e. \(i\) is labelled as an outlier, then \(\alpha _{i} \) should be nonzero; on the other hand if \(\gamma _{i} = 0\), i.e. \(i \) is labelled as an inlier, then \(\alpha _{i} \) should be exactly 0. This intuition can be translated into the constraint  \(\left( 1 - \gamma _{i}  \right)\alpha _{i} =0 \), which can be further modeled via integer optimization using the Special Ordered Set of Type 1 (SOS-1), that is a set contains at most one non-zero variable.\footnote{
    \eqref{eq:L0_prob} can also be formulated as a MIO problem using big-\(M\)  method,
    \begin{equation*}
        \label{eq:mio_bigm}
        \min_{\beta, \alpha, \gamma} L_N\left( \beta, \alpha \right) 
        \text{ s.t. }
        \gamma_i \in \left\{ 0, 1 \right\},\,
        \sum_{i=1}^N \gamma_i\leq k,\,
        -\gamma_i M_\alpha \leq \alpha_i \leq \gamma_i M_\alpha, 
    \end{equation*}
    where $M_\alpha$ is provides a sufficiently large bound for $ \alpha $. 
    In practice, we set \(M_{\alpha} = \tau \left\lVert \widehat{\alpha}  \right\rVert_{\infty} \) with \(\tau > 1\) and \(\widehat{\alpha} \) is the initial estimator form the heuristic algorithm.   
} \(M_{\alpha } \) and \(M_{\alpha, 1}\) are user-defined bound parameters that can help to tighten the parameter space and improve the computation performance. 

The optimization problem \eqref{eq:mio_prob} is a well-posed MIO problem, and provable global optimality can be achieved by optimization solvers such as \texttt{gurobi} \citep{gurobi}. The computational efficiency and solution quality are highly dependent on the initial values provided to the solver. Furthermore, as demonstrated in the Monte Carlo experiments in Table \ref{tab:algo_compare}, directly solving \eqref{eq:mio_prob} does not achieve provable optimality within the prespecified 5-minute timeframe when the sample size \(N\) and the number of outliers \(k\) are large. In the following subsection, we propose systematic approximate algorithms to solve \eqref{eq:L0_prob}.

\subsection{Heuristics\label{subsec:heuristics}}

The same as best subset selection studied in \citet{bertsimas/king/mazumder:2016:bestsubsetmio}, $ L_0 $-regularization robust regression \eqref{eq:L0_prob} cannot be solved in polynomial time, i.e. it is NP-hard \citep{natarajan:1995:nphard}. The computational cost of solving for the exact solution increases steeply as the sample size increases. Heuristic algorithms that can find approximate solutions are useful for parameter tuning and providing warm-starts for the solvers. In this section, we follow \citet{bertsimas/king/mazumder:2016:bestsubsetmio,hazimeh/mazumder:2020:fastsubset,thompson2022robust,mazumder/radchenko/dedieu:2022:subsetselection} to provide heuristic algorithms based on iterative hard-thresholding, neighborhood search, local combinatorial search for \eqref{eq:L0_prob}.

\subsubsection{Iterative Hard-thresholding}

Iterative hard-thresholding (IHT) based on project gradient descent is a generic heuristic algorithm for general nonlinear programming with sparsity constraints \citep{beck/eldar:2013:sparsityconstraint} and it is successfully applied to best subset selection in \citet{bertsimas/king/mazumder:2016:bestsubsetmio}. For problem \eqref{eq:L0_prob}, we propose the following iterative hard-thresholding algorithm, which is a simplified version of the projected block-coordinate gradient descent in \citet[Algorithm 1]{thompson2022robust}.

Define the hard-thresholding operator for $ c\in \mathbb{R}^{N} $  as
\begin{equation*}
    H_k\left( c \right) = \argmin_{\left\Vert \alpha \right\Vert_0 \leq k} \left\Vert \alpha - c \right\Vert_2^2.
\end{equation*}
As shown in \citet[Proposition 3]{bertsimas/king/mazumder:2016:bestsubsetmio}, if $ \hat{\alpha} \in H_k\left( c \right)$, then $ \hat{\alpha} $  retains the $ k $ largest (in absolute value) elements of $ c $ and sets the rest to zero, i.e.
\begin{equation*}
    H_k\left( c \right) = \begin{cases}
        c_i &  \text{if } i \in \left\{ (1), (2), \cdots, (k) \right\}\\
        0 & \text{otherwise},
    \end{cases}
\end{equation*}
if $ \left\{ (1), (2), \cdots, (N) \right\}$ is an ordering of $ \mathcal{N} $ such that $ \left\vert c_{(1)} \right\vert \geq \left\vert c_{(2)} \right\vert \geq \cdots \geq \left\vert c_{(N)} \right\vert$.

The iterative hard-thresholding algorithm starts with an initial robust estimator of $ \beta $ and iteratively updates $ \alpha $ and $ \beta $ by applying the hard-thresholding operator to the residuals and running least square estimation based on the support of updated $ \alpha $. The details are summarized in Algorithm \ref{algo:iht}.

\begin{algorithm}
    \caption{Iterative Hard-thresholding (IHT)}
    \label{algo:iht}
    \SetKwInOut{Input}{input}
    \SetKwInOut{Output}{output}

    \Input{
        An initial robust estimator $\hat{\beta}^{(0)}$, sparsity parameter $k$ and the maximum number of iterations $M$.
    }
    Initialize $j = 0$,
    \While{
        $\left\Vert \hat{\beta}^{(j+1)} - \hat{\beta}^{(j)} \right\Vert_2 > 0$ and $j\leq M$
    }{
        $\hat{\alpha}^{(j)} \gets H_k\left( Y - X\hat{\beta}^{(j)} \right)$\;
        $\hat{\beta}^{(j+1)} \gets \arg\min_{\beta} L_N\left( \beta, \hat{\alpha}^{(j)} \right)$\;
        $j\gets j+1$
    }
    \Output{
        $\hat{\beta}^{\text{IHT}} = \hat{\beta}^{(j)}$ and $\hat{\alpha}^{\text{IHT}} = \hat{\alpha}^{(j)}$.
    }
\end{algorithm}

\begin{remark}
    The initial estimator \(\hat{\beta}^{(0)}\) for Algorithm \ref{algo:iht} can be obtained by Huber regression or least absolute deviation estimation.
    Algorithm \ref{algo:iht} is a simplified version of the projected block-coordinate gradient descent in \citet[Algorithm 1]{thompson2022robust}. The convergence of IHT directly follows from \citet{thompson2022robust}. 
    In addition, $ \left( \hat{\beta}^{\text{IHT}^\prime}, {\hat{\alpha}^{\text{IHT}^\prime}}  \right)^{\prime}$  is automatically coordinate-wise optimal in the sense that optimizing concerning one coordinate at a time, while keeping others ﬁxed, cannot improve the objective \citep{hazimeh/mazumder:2020:fastsubset}.
\end{remark}

\subsubsection{Local Combinatorial Search}


Despite the iterative hard-thresholding algorithm is fast and easy to implement, it is only guaranteed to converge to a coordinate-wise optimal solution. In this section, we propose a local combinatorial search algorithm, similar to \citet{hazimeh/mazumder:2020:fastsubset}, to further refine the solution.

Note that \eqref{eq:L0_prob} is equivalent to 
\begin{equation*}
    \min_{
        \mathcal{I}\subset \left[ N \right],\, 
        \left\vert \mathcal{I} \right\vert \geq  N - k
    } \min_{\beta} 
    \sum_{i\in \mathcal{I}} \left( y_i - x_i^\prime \beta \right)^2.
\end{equation*}
For an estimate $ \hat{\alpha} $, denote $ \hat{\mathcal{I}} = \left\{ i\in \mathcal{N}: \hat{\alpha}_i = 0 \right\}$ and $ \hat{\mathcal{O}} = \left\{ i\in \mathcal{N}: \hat{\alpha}_i \neq 0 \right\}$. Following \citet{hazimeh/mazumder:2020:fastsubset},  $ \hat{\alpha} $ is said to be swap-inescapable of order $ l $ if arbitrarily swapping up to $ l $ observations between $ \hat{\mathcal{I}} $ and $ \hat{\mathcal{O}} $ and then optimizing over the new support cannot improve the objective value. Formally,
\begin{equation}
    \label{eq:local_comb_search_prob}
    \min_\beta \sum_{i\in \hat{\mathcal{I}}} \left( y_i - x_i^\prime \beta \right)^2 =
    \min_{
        \substack{
            \mathcal{S}_1 \subseteq \hat{\mathcal{I}}, \mathcal{S}_2 \subseteq \hat{\mathcal{O}} \\
            \left\vert \mathcal{S}_1 \right\vert \leq \left\vert \mathcal{S}_2 \right\vert \leq l
        }
    } \min_{\beta} \sum_{i\in \left( \hat{\mathcal{I}} \setminus \mathcal{S}_1 \right) \cup \mathcal{S}_2} \left( y_i - x_i^\prime \beta \right)^2.
\end{equation}
If solution $ \hat{\alpha} $  is swap-inescapable of order $ k $, then $ \hat{\alpha} $, associated with the OLS estimates $ \hat{\beta} $ using observations in $ \hat{\mathcal{I}} $, is the exact global optimal solution to \eqref{eq:L0_prob}. By solving the local combinatorial search problem in \eqref{eq:local_comb_search_prob} for small $ l < k $, say $ l = 1$ or $2$, we improve the IHT algorithm and verify the \textit{local combinatorial exactness}. 

The problem in \eqref{eq:local_comb_search_prob} can be formulated as a mixed-integer optimization (MIO) problem given by
\begin{align}
    & \min_{\beta, \gamma, \alpha} L_N\left( \beta, \alpha \right) \nonumber \\
    \text{ s.t. } & 
    \gamma_i \in \left\{ 0, 1 \right\},\, i \in \left[ N \right], \nonumber\\
    & \sum_{i = 1 }^{N}\gamma_i \leq k, \nonumber\\
    & \left(\alpha_i, 1 - \gamma_i  \right) \in \text{SOS-1},\, i \in \left[ N \right],   \label{eq:local_comb_search_mio}\\
    & \sum_{i\in\hat{\mathcal{I}}} \gamma_i \leq l,\, \nonumber\\
    & \sum_{i\in\hat{\mathcal{O}}} \gamma_i \geq k - l, \nonumber\\
    & \left\lVert \alpha  \right\rVert_{\infty} \leq   M_{\alpha},\, \left\lVert \alpha \right\rVert_1 \leq M_{\alpha, 1}. \nonumber
\end{align}
The constraints \(\sum_{i\in\hat{\mathcal{I}}} \gamma_i \leq l\) and \(\sum_{i\in\hat{\mathcal{O}}} \gamma_i \geq k - l\) restrict the number of swaps between \(\hat{\mathcal{I}} \) and \(\hat{\mathcal{O}} \) up to \(l\). 
\eqref{eq:local_comb_search_mio} has a much smaller search space than the original problem \eqref{eq:mio_prob} when $ l $ is small.\footnote{
    When $ l = 1 $, we can solve the problem in a brute force way by running least square estimation on all possible 1-1 swaps between $ \hat{\mathcal{I}} $ and $ \hat{\mathcal{O}} $ and comparing the $ k\left( N - k \right) $ resulting objective values without invoking the MIO solver.
} The heuristics combining IHT and local combinatorial search are summarized in Algorithm \ref{algo:local_comb_search}. 

\begin{algorithm}[h]
    \caption{Local Combinatorial Search}
    \label{algo:local_comb_search}
    \SetKwInOut{Input}{input}
    \SetKwInOut{Output}{output}

    \Input{An initial robust estimator $\hat{\beta}^{(0)}$, sparsity parameter $ k $  and local exactness level $l$.}
    \For{j = 0, 1, 2, ...}{
        \begin{enumerate}
            \item {
                $ \left(\hat{\beta}^{(j+1)}, \hat{\alpha}^{(j+1)} \right) \gets $ Output of Algorithm \ref{algo:iht} (IHT) initialized with $ \hat{\beta}^{(j)} $; 
            }
            \item {
                Compute $ \hat{\mathcal{I}}^{(j+1)} = \left\{ i\in \mathcal{N}: \hat{\alpha}_i^{(j+1)} = 0 \right\}$ and $ \hat{\mathcal{O}}^{(j+1)} = \left\{ i\in \mathcal{N}: \hat{\alpha}_i^{(j+1)} \neq 0 \right\}$;
            }
            \item {
                $ \left(\tilde{\beta}, \tilde{\alpha} \right) \gets $ solutions to local combinatorial search problem in \eqref{eq:local_comb_search_mio} given $  \hat{\mathcal{I}}^{(j+1)} $, $  \hat{\mathcal{O}}^{(j+1)} $ and $ l $.
            }
        \end{enumerate}
        \eIf{
            $ L_N\left( 
                \tilde{\beta}, \tilde{\alpha}
             \right) < 
            L_N\left( 
                \hat{\beta}^{(j+1)}, \hat{\alpha}^{(j+1)}
             \right) $
        }{
          $ \left( \hat{\beta}^{(j+1)}, \hat{\alpha}^{(j+1)} \right) \gets  \left( 
            \tilde{\beta}, \tilde{\alpha}\right) $.
        }
        {
            \Output{$\hat{\beta} = \hat{\beta}^{(j+1)}$ and $\hat{\alpha}= \hat{\alpha}^{(j+1)}$.}
        }
    }
    
\end{algorithm}

\subsubsection{Neighborhood Search}
Algorithm \ref{algo:local_comb_search} is guaranteed to provide a solution that is locally optimal in the sense of swap-inescapability of order $ l $. However, the quality of the solution depends on the initial inputs due to the noncovexity. As noted in \citet{thompson2022robust} and \citet{mazumder/radchenko/dedieu:2022:subsetselection}, a neighborhood search procedure can serve as a useful systematic way of perturbing the initial inputs to improve the solution quality of the heuristic algorithm. Let $ [K] = \{1, 2, \cdots, K\}$ with $ K \leq \lfloor N/2\rfloor $ be the set of candidate sparsity parameters. 

Denote $ b_{k, l}\left( \beta \right) $ and $ a_{k, l}\left( \beta \right) $ as the outputs of Algorithm \ref{algo:local_comb_search} initialized with $ \beta $,  $ k $ and $ l $. The neighborhood search procedure is summarized in Algorithm \ref{algo:neighborhood_search}. As a by-product of the procedure, we obtain a solution for each $ k\in \left[ K \right]  $ which is useful for sensitivity analysis and parameter tuning, which is discussed in Section \ref{sec:bic}.

\begin{algorithm}
    \caption{Neighborhood Search}
    \label{algo:neighborhood_search}
    \SetKwInOut{Input}{input}
    \SetKwInOut{Output}{output}

    \Input{An initial robust estimator $\hat{\beta}^{(0)}$, $ \left[ K \right]  = \left\{ 1, 2, \cdots, K \right\}$  and local exactness level $l$.  }

    \For{k = 1, 2, ..., K}{
        $\hat{\beta}^{(1)}_k \gets b_{k,l}\left( \hat{\beta}^{(0)} \right)$; $\hat{\alpha}^{(1)}_k \gets a_{k,l}\left( \hat{\beta}^{(0)} \right)$.
    }

    Initialize $ j = 1 $, \While{
        $ \left\vert 
            \sum_{k=1}^{K} L_N\left( \beta^{(j)}, \alpha^{(j)} \right) - \sum_{k=1}^{K} L_N\left( \beta^{(j-1)}, \alpha^{(j-1)} \right)
         \right\vert > 0$ 
    }{
        \For{k = 1, 2, ..., K}{
            $\tilde{\beta}_{-} \gets b_{k,l}\left( \hat{\beta}^{(j)}_{k-1} \right)$; $\tilde{\alpha}_{-} \gets a_{k,l}\left( \hat{\beta}^{(j)}_{k-1} \right)$;
            $\tilde{\beta}_{+} \gets b_{k,l}\left( \hat{\beta}^{(j)}_{k+1} \right)$; $\tilde{\alpha}_{+} \gets a_{k,l}\left( \hat{\beta}^{(j)}_{k+1} \right)$;
            $ \left( \tilde{\beta}, \tilde{\alpha} \right) \gets 
            \argmin\limits_{\left( \beta, \alpha \right) \in \left\{ 
                \left( \hat{\beta}^{(j)}_{k}, \hat{\alpha}^{(j)}_{k} \right),
                \left( \tilde{\beta}_{-}, \tilde{\alpha}_{-} \right),
                \left( \tilde{\beta}_{+}, \tilde{\alpha}_{+} \right)
            \right\}} L_N\left( \beta, \alpha \right)$;
            $ \hat{\beta}^{(j+1)}_{k} \gets \tilde{\beta}$; $ \hat{\alpha}^{(j+1)}_{k} \gets \tilde{\alpha}$.
        }
        $j \gets j + 1$
    }
    \Output{
        $\hat{\beta}_k \gets \hat{\beta}^{(j)}_{k}$, $\hat{\alpha}_k \gets \hat{\alpha}^{(j)}_{k}$, for $ k \in \left[ K \right]  $.
    }
\end{algorithm}

\subsection{Tuning Parameter Choice\label{sec:bic}}

In both \(L_1\)- and \(L_0\)- regularized estimation methods, the selection of tuning parameters, \(\psi \) in \eqref{eq:l1_prob} and \(k\) in \eqref{eq:L0_prob}, plays a critical role.
We propose 
the BIC-type information criteria, \begin{equation}
    \text{BIC}^\ast \left( k \right) = N\log \left( \frac{\left\Vert Y - 
    X\hat{\beta} - \hat{\alpha} \right\Vert_2^2}{N} \right) + k\log\left( N 
    \right). \label{eq:bic}
\end{equation}
For computational efficiency, we use Algorithm \ref{algo:neighborhood_search} 
to generate solutions for a grid of candidate tuning parameters and select the 
one that minimizes the BIC,
\begin{equation}
    \hat{k} =\argmin _{k\in \{1, 2,\cdots, K\}} \text{BIC}^\ast \left( k 
    \right),
    \label{eq:khat_bic}
\end{equation}
where $K \leq \lfloor N/2\rfloor $ is the maximum potential number of outliers.

The widely used theoretical property to evaluate robust estimation methods is the finite sample breakdown point, proposed in \citet{hampel1971general} and \citet{huber1983notion}. Suppose the original sample is $\left( Y, X \right)$ and the contaminated sample is $\left( \tilde{Y}_{(k_0)}, \tilde{X}_{(k_0)} \right)$ with $k_0$ observations in the original sample being arbitrarily replaced by outliers.  The finite sample breakdown point of an estimator $T$ is defined as
\begin{equation*}
    B\left( T; \left( Y, X \right) \right) = \min_{k_0}\left\{ 
        \frac{k_0}{N} \vert 
        \sup_{\tilde{Y}_{(k_0)}, \tilde{X}_{(k_0)}} \left\Vert 
            T\left( Y, X \right) - T\left( \tilde{Y}_{(k_0)}, \tilde{X}_{(k_0)} \right) 
         \right\Vert= \infty 
     \right\}.
\end{equation*}

\begin{proposition}
    \label{prop:breakdown_point}
    Let $\left( Y, X \right)$ be a sample of size $N$. $T\left( Y, X \right)$ denotes the estimator for $\beta$ obtained by solving \eqref{eq:L0_prob} with $\hat{k}$ selected by \eqref{eq:khat_bic} based on $\left( Y, X \right)$. For a fixed $k$, the estimator defined by \eqref{eq:L0_prob} with $k$ is denoted as $T_k\left( Y, X \right)$. Then, $T\left( X, Y \right)$ has the finite sample break down point
    $$B\left( T; \left( Y, X \right) \right) = \frac{\lfloor N/2\rfloor + 1}{N}.$$
\end{proposition}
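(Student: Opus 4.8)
\noindent\textit{Proof proposal.} The plan is to recast the fixed-$k$ problem as least trimmed squares and then show that the data-driven choice of $k$ does not degrade the best-case breakdown fraction. Solving \eqref{eq:L0_prob} at sparsity $k$ reduces to $\min_{\lvert\mathcal I\rvert=N-k}\min_\beta\sum_{i\in\mathcal I}(y_i-x_i'\beta)^2$; let $T_k$ denote the resulting coefficient estimator, which is OLS on the retained set $\hat{\mathcal I}$ with $\lvert\hat{\mathcal I}\rvert=N-k$, so $T=T_{\hat k}$ with $\hat k\in\{1,\dots,K\}$ and $K=\lfloor N/2\rfloor$. I would establish $B(T)\le(\lfloor N/2\rfloor+1)/N$ and $B(T)\ge(\lfloor N/2\rfloor+1)/N$ separately.

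\smallskip
\noindent\textit{Upper bound.} Take $k_0=\lfloor N/2\rfloor+1$ and replace $k_0$ observations by points lying exactly on a hyperplane $\tilde y=\tilde x'\beta^{\dagger}$, with the contaminated regressors in general position and $\lVert\beta^{\dagger}\rVert$ arbitrarily large. At the admissible level $k=N-k_0=\lceil N/2\rceil-1\le\lfloor N/2\rfloor$, retaining exactly the $k_0$ contaminated points and using the $k$ free coordinates of $\alpha$ to absorb the remaining $N-k_0$ observations gives $L_N=0$, hence $\mathrm{BIC}^{\ast}(k)=-\infty$; since only the contaminated points are collinear, the zero-residual fits at $k\le K$ return $\hat\beta=\beta^{\dagger}$. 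Letting $\lVert\beta^{\dagger}\rVert\to\infty$ yields $\sup\lVert T(\tilde Y,\tilde X)\rVert=\infty$, so $B(T)\le(\lfloor N/2\rfloor+1)/N$.

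\smallskip
\noindent\textit{Lower bound.} Suppose at most $k_0\le\lfloor N/2\rfloor$ observations are contaminated ($k_0\ge1$ without loss); let $\mathcal G$ be the clean set, $\lvert\mathcal G\rvert=N-k_0\ge\lceil N/2\rceil$. The key observation is that the BIC attained along the contamination is bounded by a clean-sample constant: at the admissible level $k=k_0$ the estimator may retain $\hat{\mathcal I}=\mathcal G$, so $2L_N(\hat\beta_{k_0},\hat\alpha_{k_0})\le\min_\beta\sum_{i\in\mathcal G}(y_i-x_i'\beta)^2=:V^{\ast}$ and therefore $\mathrm{BIC}^{\ast}(\hat k)\le\mathrm{BIC}^{\ast}(k_0)\le N\log(V^{\ast}/N)+k_0\log N=:C_0$. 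Unwinding $\mathrm{BIC}^{\ast}$ forces $\sum_{i\in\hat{\mathcal I}}(\tilde y_i-\tilde x_i'\hat\beta)^2=\lVert\tilde Y-\tilde X\hat\beta-\hat\alpha\rVert_2^2\le N\exp(C_0/N)=:C_1$, with $\lvert\hat{\mathcal I}\rvert=N-\hat k\ge\lceil N/2\rceil$. Since $\lvert\hat{\mathcal I}\rvert\ge\lceil N/2\rceil>k_0$ (for $N$ odd; see caveats), $\hat{\mathcal I}$ retains clean observations whose regressors are fixed, and under a standard general-position condition on $X$ the bound $\sum_{i\in\hat{\mathcal I}\cap\mathcal G}(y_i-x_i'\hat\beta)^2\le C_1$ confines $\hat\beta$ to a compact set uniformly in the contamination, since otherwise the left-hand side would diverge. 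Hence $\sup\lVert T(\tilde Y,\tilde X)\rVert<\infty$ and $B(T)\ge(\lfloor N/2\rfloor+1)/N$.

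\smallskip
\noindent The decisive and hardest ingredient is the lower bound, specifically controlling the \emph{data-adaptive} $\hat k$: one must rule out that the adversary drives the BIC to choose a sparsity level whose retained set is dominated by contamination. The chain above closes this off because, whenever strictly fewer than half the data are corrupted, the admissible level $k=k_0$ lies well inside the cap $K\le\lfloor N/2\rfloor$ and has a fit on the clean majority controlled by clean-sample quantities; this caps $\mathrm{BIC}^{\ast}(\hat k)$, hence the fitted residual, hence $\lVert\hat\beta\rVert$. Two points need care in a fully rigorous write-up: the parity boundary ($\lceil N/2\rceil$ versus $\lfloor N/2\rfloor$ at $k_0=\lfloor N/2\rfloor$ when $N$ is even), and the regressor dimension $p$ — the classical finite-sample breakdown fraction of least trimmed squares is $(\lfloor(N-p)/2\rfloor+1)/N$ under a general-position design, and the clean statement $(\lfloor N/2\rfloor+1)/N$ treats $p$ as negligible relative to $N$.
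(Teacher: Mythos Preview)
Your argument follows essentially the same route as the paper: for the lower bound, both of you exploit that the admissible level $k=k_0$ (the paper uses $k=\lfloor N/2\rfloor$) has a fit on the uncontaminated majority controlled by clean-sample quantities, which caps $\mathrm{BIC}^{\ast}(\hat k)$ and hence the fitted objective; for the upper bound, both argue that with $k_0=\lfloor N/2\rfloor+1$ contaminated points every admissible retained set must include an outlier. The presentational difference is that the paper first reduces from the estimator $T$ to the objective value $\Theta$ via the one-line claim ``$\Theta$ is a quadratic function of $T$, so $B(T)=B(\Theta)$,'' and then bounds $\Theta$ by contradiction, whereas you work directly with $\hat\beta$ and argue that enough clean regressors survive in $\hat{\mathcal I}$ to confine $\hat\beta$ to a compact set. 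Your version is in fact more careful: the paper's quadratic-equivalence step glosses over exactly the issue you flag (bounded residual does not immediately bound $\hat\beta$ without a general-position condition on the surviving clean design), and your parity and dimension-$p$ caveats are genuine refinements that the paper does not address. Your upper-bound construction (perfect-fit contamination driving $\mathrm{BIC}^{\ast}$ to $-\infty$) is a different device from the paper's, but both are valid.
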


\begin{proof}
The proof is relegated in Appendix \ref{app:proof}.
\end{proof}

\begin{remark}
    For a fixed $k$, $B\left( T_k; \left( Y, X \right) \right) = \frac{k + 1}{N}$, as shown in \citet{thompson2022robust}. Proposition \ref{prop:breakdown_point} extends the results to the estimation procedure with $\mathrm{BIC}^\ast\left( k \right)$ which achieves the optimal breakdown point.
\end{remark}

\section{Monte Carlo Simulation\label{sec:mc}}
In this section, we examine the numerical performance of the proposed $L_0$-regularized estimation procedure. We compare its coefficient estimation accuracy and prediction error to those of the \(L_1\)-regularized estimation and the classical methods, LAD and OLS.

\subsection{Setup}
Following the setting in Section \ref{sec:model}, we consider the following data generating processes (DGPs).

\medskip
\noindent \textbf{DGP 1} (Exogenous Outliers).
Consider the linear regression model with outliers,
\begin{equation*}
    y_i = \beta_0 + \beta_1 x_{i,1} + \beta_2 x_{i,2} + \gamma_i \alpha_i
    + u_i,\, i = 1, 2, \cdots, N,
\end{equation*}
where $\gamma_i$ is the indicator for outliers and $ \alpha_i $ is the outlier fixed effect. 
We generate the regressors by $x_{i, 1} = \left(v_{i,1}^2 + v_{i,2}^2 - 2\right)/2$ and $ x_{i, 2} = x_{i, 1} + v_{i, 3} $, where $v_{i, 
j} \sim \text{i.i.d.}N(0, 1)$ for $ j = 1, 2, 3 $, and the error term by $ u_i  \sim \text{i.i.d.}N(0, 1)$. 
Let $p\in \left( 0, 1 \right)$ denote the fraction of outliers. $k_0 = \left\lfloor pN \right\rfloor$ and 
$
    \gamma_i = \begin{cases}
        1 & i <= k_0\\
        0 & i > k_0
    \end{cases}.
$ 
Let $\alpha_i \sim \text{i.i.d.}N(\mu_\alpha, \sigma_\alpha^2)$ be exogenous shocks to outliers where $\left( \mu_\alpha,\sigma_\alpha \right) \in \left\{
    (0, 5), (5, 5), (10, 10)
\right\}$. The true coefficients are $ \beta_0 = 0.5 $,  $ \beta_1 = \left( 1, 1 \right)^\prime $.

\medskip
\noindent\textbf{DGP 2} (Endogenous Outliers). This DGP deviates from DGP 1 by allowing the outlier fixed effects to be correlated with the regressors. Let $ \alpha_i = \rho \left( v_{i, 1} + v_{i, 2}+ v_{i, 3}  \right) $ be a linear combination of the innovations to make the outlier fixed effects correlated with regressors and create endogeneity. The parameter $ \rho \in \left\{  2, 5, 10 \right\} $ to control the degree of correlation. The rest components are the same as DGP 1.

\medskip
\noindent{\textbf{DGP 3}} (Predictive Regression with Endogenous Shocks). 
Consider a linear predictive regression model as studied in \citet{kostakis2015robust,koo2020high,lee2022lasso}.  
The dependent variable is generated as
\begin{equation*}
y_{i+1}= \beta _{0} + \beta _{1} z_{i} +
\sum_{l=1}^{2}x_{i,l}^{c}\phi_{l} + \sum_{l=1}^{2}x_{i,l}\eta_{l,N} + \gamma _{i} \alpha _{i} + u_{i+1},
\end{equation*}
where $\beta _{0} = 0.3$, $\beta _{1} = 1$, \(\phi = \left( 1, -1 \right) \) and \(\eta _{N} = \left( 1 / \sqrt{N}, -1 / \sqrt{N} \right) \).
The vector of the stacked innovation $\xi_{i}=\left(z_{i},\underset{2\times1}{v_{i}^{\prime}},\underset{2\times1}{e_{i}^{\prime}},u_{i}\right)^{\prime}$
follows a VAR(1) process $\xi_{i}=\Phi\xi_{i-1}+\varepsilon_{i}$,
where $\varepsilon_{i}\sim iid\;N\left(0,\Sigma_{\varepsilon}\right)$ in
which $\Phi$ and $\Sigma_{\varepsilon}$ are empirically estimated
from the \citet{welch2008comprehensive} data as in \citet[Supplements S1]{lee2022lasso}.
$x_{i}^{c}\in\mathbb{R}^{2}$ is a vector I(1) process with
cointegration rank 1 based on the VECM, $\Delta x_{i}^{c}=\Gamma^{\prime}\Lambda x_{i-1}^{c}+v_{i},$
where $\Lambda=\begin{pmatrix}1 & -1\\
0 & 1
\end{pmatrix}$ and $\Gamma=\begin{pmatrix}0 & 1\\
1 & 0
\end{pmatrix}$ are the cointegrating matrix and the loading matrix, respectively.
$\left(x_{i,l}\right)_{l=1}^{2}$ are random walks generated by $x_{i,l}=x_{i-1,l}+e_{i,l}$, $l=1,2$. 
Let \(p \in (0,1)\) be the fraction of outliers and define \(k_0 = \lfloor pN \rfloor\). We impose two outlier periods. Let \(c_1 = \lfloor 0.25 N \rfloor\) and \(c_2 = \lfloor 0.75  N \rfloor\) be the positions in the sample where the outliers start. The outlier indicator \(\gamma_i = 1\) for \(i = c_l + 1, c_l + 2, \cdots, c_l + \lfloor k_0 / 2 \rfloor\), \(l = 1, 2\) and 0 otherwise. The outlier shifts \(\alpha _{i} = \rho \left( z_i + v_{i} ^{\prime} \iota_2 \right) \) where \(\rho  \in \left\{ 2,5,10 \right\} \).

\medskip

To evaluate each heuristic algorithm in Section \ref{subsec:heuristics} as compared to solving the full-scale mixed integer optimization for \eqref{eq:mio_prob}, we will report the measure \textit{Equal MIO}, which is the frequency among replications the estimates obtained by the algorithm are the same as those of the MIO solution of \eqref{eq:mio_prob}. For each implementation invoking the MIO solver, we report \textit{relative optimality gap}, which is defined as
\[
  \text{Relative Optimality Gap} = \frac{f^{P} - f^{D}}{f^{D}}, 
\]
where \(f^{P}\) is the attained upper bound of the objective value of the MIO solution and \(f^{D}\) is the dual lower bound of the objective value delivered by the solver.\footnote{The value can be read from the solver output. Details of the definition can be found at \href{https://www.gurobi.com/documentation/current/refman/mipgap2.html}{https://www.gurobi.com/documentation/current/refman/mipgap2.html}.} Relative optimality gap measures the progress of optimality verification of the solver. The solution is verified to be globally optimal if this gap is less than a prespecified tolerance threshold \(10^{-4}\). 

For the estimation and out-of-sample prediction performance, we report the bias, root mean squared error (RMSE) and prediction errors. Generically, bias and RMSE are calculated by 
$R^{-1}\sum_{r=1}^R \left( \hat{ \beta}^{(r)} - \beta \right)$ and 
$\sqrt{%
R^{-1}\sum_{r= 1}^R \left( \hat{\beta}^{(r)} -\beta \right)^2}$, respectively, for true parameter $\beta$, its estimate $\hat{\beta}%
^{(r)}$ across $R$ replications. To calculate the prediction error, we generate test data following the same DGP without outliers, \(\left\{ \widetilde{y} _{i}, \widetilde{x}^{\prime}_{i}  \right\}_{i=1}^{N_t} \), \(N_{t} = 1000\). The prediction error is calculated by 
\[
  \text{Prediction Error} = R^{-1} \sum_{r=1}^R \left( \frac{1}{N_t} \sum_{i=1}^{N_t} \left( \widetilde{y} _{i}^{(r)} - \widetilde{x} _{i}^{(r)\prime} \hat{\beta}^{(r)} \right)^2 \right).
\]

In all experiments, we consider $ p\in \left\{ 5\%, 10\%, 20\% \right\} $ and $ N \in \left\{ 100, 200, 400 \right\} $. All experiments are carried out on a Linux machine with an Intel i9-13900K CPU and \texttt{Guorbi} optimization solver at version 11.0. The time limit for the solver is set to 5 minutes.

\subsection{Performance of the Heuristic Algorithms}

We begin by evaluating the performance of the heuristic algorithms presented in Section \ref{subsec:heuristics} in comparison to solving the full-scale mixed integer optimization problem \eqref{eq:mio_prob}. 
We set the parameters \(\left( \mu _\alpha , \sigma _\alpha  \right) = \left( 5,5 \right)  \) for DGP 1 and \(\rho = 5 \) for DGP 2 and 3 and run \(100\) replications for each setup. The parameter \(k\) is set to the true value \(k_0\) for all algorithms. For MIO problems \eqref{eq:mio_prob} and \eqref{eq:local_comb_search_mio}, \(M_\alpha = \tau \left\lVert \hat{\alpha}   \right\rVert_\infty  \) and \(M_{\alpha ,1} = \tau \left\lVert \hat{\alpha}  \right\rVert_1 \)  where \(\hat{\alpha} \) is the initial estimator and \(\tau  = 1.5\).   

For each simulated dataset, we implement Algorithm \ref{algo:iht}, the iterative hard-thresholding (IHT) method. Subsequently, using the IHT estimate as the initial estimator, we apply the local combinatorial search (LCS) algorithm with local exactness levels \( l = 1\) and \( l = 2\), referred to as LCS-1 and LCS-2, respectively. Additionally, we solve the full-scale mixed integer optimization problem \eqref{eq:mio_prob} using the IHT estimate as the warm-start. 
In Table \ref{tab:algo_compare}, we report the average CPU runtime (in seconds) for each algorithm, the frequency of IHT, LCS-1 and LCS-2 estimates are equal to MIO and the average relative optimality gap. 

\begin{sidewaystable}[htbp]
    \begin{center}
        \caption{Heuristic Algorithms Performance}
        \label{tab:algo_compare}
        \begin{tabular}{ccc|cccc|ccc|ccc}
    \hline
     &  &  & \multicolumn{4}{c|}{CPU Time (Seconds)} & \multicolumn{3}{c|}{Equal MIO} & \multicolumn{3}{c}{Optimality Gap} \\ \cline{4-13} 
     & $N$ & $p$ & IHT & LCS-1 & LCS-2 & MIO & IHT & LCS-1 & LCS-2 & LCS-1 & LCS-2 & MIO \\ \hline
    \multicolumn{1}{c|}{\multirow{9}{*}{\rotatebox{90}{DGP 1}}} & 100 & 0.05 & 0.0004 & 0.0969 & 0.2030 & 0.1596 & 0.91 & 1.00 & 1.00 & 0.0001 & 0.0000 & 0.0000 \\
    \multicolumn{1}{c|}{} & 100 & 0.1 & 0.0004 & 0.1169 & 0.8459 & 5.2721 & 0.83 & 1.00 & 1.00 & 0.0001 & 0.0002 & 0.0000 \\
    \multicolumn{1}{c|}{} & 100 & 0.2 & 0.0004 & 0.2433 & 1.4303 & 222.29 & 0.59 & 0.93 & 0.99 & 0.0000 & 0.0003 & 0.1298 \\
    \multicolumn{1}{c|}{} & 200 & 0.05 & 0.0005 & 0.3674 & 5.2707 & 65.573 & 0.95 & 0.98 & 1.00 & 0.0000 & 0.0004 & 0.0131 \\
    \multicolumn{1}{c|}{} & 200 & 0.1 & 0.0005 & 0.6265 & 12.093 & 304.92 & 0.78 & 0.94 & 1.00 & 0.0001 & 0.0009 & 0.3527 \\
    \multicolumn{1}{c|}{} & 200 & 0.2 & 0.0005 & 0.7830 & 19.359 & 306.12 & 0.72 & 0.87 & 1.00 & 0.0002 & 0.0009 & 0.7530 \\
    \multicolumn{1}{c|}{} & 400 & 0.05 & 0.0007 & 1.7538 & 105.28 & 302.69 & 0.92 & 1.00 & 1.00 & 0.0000 & 0.0010 & 0.4249 \\
    \multicolumn{1}{c|}{} & 400 & 0.1 & 0.0007 & 4.3731 & 205.89 & 302.87 & 0.77 & 0.92 & 0.98 & 0.0006 & 0.0010 & 0.7960 \\
    \multicolumn{1}{c|}{} & 400 & 0.2 & 0.0008 & 4.7983 & 299.60 & 302.56 & 0.55 & 0.83 & 0.98 & 0.0008 & 0.0045 & 0.9500 \\ \hline
    \multicolumn{1}{c|}{\multirow{9}{*}{\rotatebox{90}{DGP 2}}} & 100 & 0.05 & 0.0004 & 0.1001 & 0.2034 & 0.2160 & 0.88 & 1.00 & 1.00 & 0.0001 & 0.0000 & 0.0000 \\
    \multicolumn{1}{c|}{} & 100 & 0.1 & 0.0004 & 0.1264 & 1.0286 & 2.4994 & 0.87 & 1.00 & 1.00 & 0.0002 & 0.0002 & 0.0000 \\
    \multicolumn{1}{c|}{} & 100 & 0.2 & 0.0004 & 0.1664 & 1.4692 & 76.835 & 0.65 & 0.94 & 0.99 & 0.0002 & 0.0004 & 0.0176 \\
    \multicolumn{1}{c|}{} & 200 & 0.05 & 0.0005 & 0.4235 & 6.0045 & 134.65 & 0.91 & 1.00 & 1.00 & 0.0001 & 0.0006 & 0.0402 \\
    \multicolumn{1}{c|}{} & 200 & 0.1 & 0.0005 & 0.6005 & 12.279 & 292.25 & 0.74 & 0.94 & 1.00 & 0.0001 & 0.0009 & 0.3163 \\
    \multicolumn{1}{c|}{} & 200 & 0.2 & 0.0005 & 0.8669 & 19.165 & 305.50 & 0.71 & 0.89 & 0.98 & 0.0003 & 0.0010 & 0.6237 \\
    \multicolumn{1}{c|}{} & 400 & 0.05 & 0.0006 & 1.9436 & 115.93 & 302.42 & 0.92 & 1.00 & 1.00 & 0.0002 & 0.0010 & 0.4884 \\
    \multicolumn{1}{c|}{} & 400 & 0.1 & 0.0007 & 4.9341 & 212.42 & 302.54 & 0.81 & 0.92 & 0.96 & 0.0007 & 0.0010 & 0.8046 \\
    \multicolumn{1}{c|}{} & 400 & 0.2 & 0.0007 & 5.2007 & 300.10 & 302.34 & 0.57 & 0.81 & 0.97 & 0.0008 & 0.0030 & 0.9354 \\ \hline
    \multicolumn{1}{c|}{\multirow{9}{*}{\rotatebox{90}{DGP 3}}} & 100 & 0.05 & 0.0008 & 0.3549 & 0.2814 & 0.3953 & 0.86 & 1.00 & 1.00 & 0.0001 & 0.0000 & 0.0000 \\
    \multicolumn{1}{c|}{} & 100 & 0.1 & 0.0009 & 0.4673 & 9.9887 & 13.779 & 0.86 & 0.99 & 1.00 & 0.0000 & 0.0005 & 0.0001 \\
    \multicolumn{1}{c|}{} & 100 & 0.2 & 0.0010 & 0.5272 & 13.365 & 214.53 & 0.60 & 0.96 & 0.99 & 0.0001 & 0.0007 & 0.2078 \\
    \multicolumn{1}{c|}{} & 200 & 0.05 & 0.0014 & 1.8395 & 38.356 & 135.95 & 0.79 & 0.97 & 1.00 & 0.0001 & 0.0008 & 0.0597 \\
    \multicolumn{1}{c|}{} & 200 & 0.1 & 0.0013 & 2.2910 & 64.206 & 293.49 & 0.75 & 0.95 & 0.99 & 0.0002 & 0.0011 & 0.4392 \\
    \multicolumn{1}{c|}{} & 200 & 0.2 & 0.0016 & 2.3277 & 61.629 & 302.97 & 0.63 & 0.84 & 0.95 & 0.0002 & 0.0009 & 0.8481 \\
    \multicolumn{1}{c|}{} & 400 & 0.05 & 0.0019 & 7.4183 & 241.30 & 301.51 & 0.87 & 0.96 & 1.00 & 0.0003 & 0.0062 & 0.5299 \\
    \multicolumn{1}{c|}{} & 400 & 0.1 & 0.0097 & 9.0912 & 265.04 & 301.38 & 0.80 & 0.88 & 0.95 & 0.0004 & 0.0075 & 0.8901 \\
    \multicolumn{1}{c|}{} & 400 & 0.2 & 0.0063 & 10.169 & 253.46 & 301.07 & 0.51 & 0.78 & 0.93 & 0.0004 & 0.0103 & 0.9900 \\ \hline
\end{tabular}
    \end{center}
    \par
    {\footnotesize \textit{Note:} The left panel reports the average computation time used for each algorithm in seconds. The middle panel reports the frequency among replications the estimates obtained by the algorithm are the same as those of the MIO solution of \eqref{eq:mio_prob}. The right panel corresponds to the average of the relative optimality gap defined as \(\frac{f^{P} - f^{D}}{f^{D}}\),
    where \(f^{P}\) is the attained upper bound of the objective value of the MIO solution and \(f^{D}\) is the dual lower bound of the objective value delivered by the solver, across replications. IHT refers to the iterative hard-thresholding in Algorithm \ref{algo:iht}. LCS-1 and LCS-2 are the local combinatorial search algorithm in Algorithm \ref{algo:local_comb_search} with \(l=1\) and \(2\), respectively. MIO refers to solving \eqref{eq:mio_prob} directly by \texttt{Gurobi} solver.}
\end{sidewaystable}

The key findings presented in Table 1 highlight the performance differences between the heuristic algorithms and the mixed integer optimization (MIO) approach. Notably, within the 5-minute time limit, the MIO method fails to complete the optimality verification for cases where the scale is larger than \(N = 200\) and \(p = 0.1\). In contrast, the iterative hard-thresholding (IHT) and the local combinatorial search (LCS-1) with \(l=1\) are significantly faster. Specifically, IHT runs in milliseconds and LCS-1 achieves optimality in seconds for most cases.
LCS-2 takes less than a minute except when \(N = 400\) and \(p = 0.1\) or \(0.2\). In these exceptional cases, the relative optimality gap of LCS-2 remains below 1\%, indicating near convergence. 
Furthermore, the initial estimator, IHT, provides solutions of the same quality to MIO in 50\% to 90\% of the cases. When advancing to LCS-1, the frequency of obtaining solutions equivalent to MIO increases to approximately 90\% while with a significantly lower computational cost. LCS-2 consistently matches the MIO solutions in nearly all replications. These findings underscore the effectiveness and computation efficiency of the heuristic algorithms. Particularly, the local combinatorial search algorithm balances the solution stability and computation costs, which is important when dealing with large sample sizes and many outliers.

\subsection{Comparison between \(L_0\) and \(L_1\)-Regularized Estimation}

In this section, we compare the performance of \(L_0\) and \(L_1\)-regularized estimation in different scenarios.  For the \(L_0\) and \(L_1\)-regularized estimation, the tuning parameters \(k\) and \(\psi\) are selected using the information criteria \eqref{eq:bic}. When choosing \(k\), we rely on the neighborhood search algorithm, as detailed in Algorithm \ref{algo:neighborhood_search} to generate the estimates for each candidate \(k \in \left[ K \right] \). Specifically, in the implementation of Algorithm \ref{algo:neighborhood_search}, we set \(K = 2k_0\) and the local exactness level \(l = 1\). 
We report LCS-2 with \(\hat{k}\) selected by BIC as the \(L_0\)-regularized estimation results. In addition, we include ordinary least square (OLS) and least absolute deviation (LAD) as benchmarks. Bias and root mean squared error (RMSE) are reported for the first slope parameter \(\beta_1\) in all data generating processes (DGPs). For each setup, we run 1000 replications.

The key findings from the Monte Carlo simulation reported in Table \ref{tab:est_compare_dgp1} to \ref{tab:est_compare_dgp3} reveal several important insights. In DGP 1, where the outlier fixed effects are exogenous and there is no endogeneity issue, the bias of the \(L_0\) method is slightly larger than that of the alternatives. However, the \(L_0\) method achieves the smallest RMSE and prediction error in most cases. Ordinary Least Squares (OLS) is not robust to outliers, and their RMSE and prediction error increase significantly as the magnitude of the outlier fixed effects becomes larger. The estimation accuracy of the \(L_1\) method is unstable when the sample size and the outlier fraction are small, but it outperforms Least Absolute Deviations (LAD) and OLS as \(N\) and \(p\) increase. In DGP 2 and 3, where endogenous outlier fixed effects are introduced, the \(L_1\)-regularized method, LAD, and OLS exhibit severe estimation bias. In contrast, the \(L_0\)-regularized method is free from estimation bias and achieves the smallest RMSE and prediction error. The accuracy gap widens as \(\rho\), the parameter controlling the degree of endogeneity, and the fraction of outliers increase. These findings demonstrate that the \(L_0\) method is more robust to the presence of endogenous outliers and provides more accurate estimation.

\begin{sidewaystable}[htbp]
    \begin{center}
    \caption{Bias, RMSE and Prediction Error Comparison: DGP 1}
    \label{tab:est_compare_dgp1}
    \begin{tabular}{ccc|rrrr|rrrr|rrrr}
  \hline
   &  &  & \multicolumn{4}{c|}{Bias} & \multicolumn{4}{c|}{RMSE} & \multicolumn{4}{c}{Prediction Error} \\ \cline{4-15} 
  $N$ & $p$ & $\left( \mu_\alpha,\sigma_\alpha \right)$ & \multicolumn{1}{c}{L0} & \multicolumn{1}{c}{L1} & \multicolumn{1}{c}{LAD} & \multicolumn{1}{c|}{OLS} & \multicolumn{1}{c}{L0} & \multicolumn{1}{c}{L1} & \multicolumn{1}{c}{LAD} & \multicolumn{1}{c|}{OLS} & \multicolumn{1}{c}{L0} & \multicolumn{1}{c}{L1} & \multicolumn{1}{c}{LAD} & \multicolumn{1}{c}{OLS} \\ \hline
  100 & 0.05 & $(0,5)$ & 0.0178 & -0.0059 & -0.0068 & 0.0044 & 0.1754 & 0.4331 & 0.2591 & 0.2172 & 1.0455 & 1.0754 & 1.0691 & 1.0708 \\
  100 & 0.05 & $(5,5)$ & 0.0163 & 0.0061 & -0.0013 & -0.0119 & 0.1687 & 0.2567 & 0.2372 & 0.2618 & 1.0471 & 1.0513 & 1.0647 & 1.1550 \\
  100 & 0.05 & $(5,10)$ & 0.0095 & 0.0174 & -0.0084 & -0.0001 & 0.1740 & 0.4375 & 0.2403 & 0.4806 & 1.0462 & 1.0899 & 1.0640 & 1.5507 \\ \hline
  100 & 0.1 & $(0,5)$ & 0.0195 & -0.0032 & -0.0009 & -0.0018 & 0.2057 & 0.3765 & 0.4099 & 0.2793 & 1.0634 & 1.0752 & 1.1189 & 1.1139 \\
  100 & 0.1 & $(5,5)$ & 0.0197 & -0.0129 & 0.0051 & 0.0071 & 0.1952 & 0.4613 & 0.2992 & 0.3372 & 1.0713 & 1.1072 & 1.1006 & 1.3960 \\
  100 & 0.1 & $(5,10)$ & 0.0217 & -0.0251 & -0.0053 & -0.0206 & 0.1907 & 0.5812 & 0.2687 & 0.6476 & 1.0726 & 1.1381 & 1.0992 & 2.5156 \\ \hline
  100 & 0.2 & $(0,5)$ & 0.0598 & 0.0013 & -0.0037 & -0.0014 & 0.2537 & 0.2037 & 0.3693 & 0.3467 & 1.1117 & 1.0707 & 1.1242 & 1.1861 \\
  100 & 0.2 & $(5,5)$ & 0.0657 & -0.0023 & 0.0048 & -0.0133 & 0.2930 & 0.4674 & 0.2858 & 0.4629 & 1.2384 & 1.2532 & 1.1423 & 2.2922 \\
  100 & 0.2 & $(5,10)$ & 0.0886 & 0.0044 & -0.0041 & -0.0359 & 0.3756 & 0.3443 & 0.3889 & 0.8796 & 1.2692 & 1.5191 & 1.1856 & 6.0357 \\ \hline
  200 & 0.05 & $(0,5)$ & 0.0055 & -0.0011 & -0.0005 & -0.0020 & 0.1276 & 0.1125 & 0.1802 & 0.1504 & 1.0227 & 1.0185 & 1.0320 & 1.0338 \\
  200 & 0.05 & $(5,5)$ & 0.0091 & 0.0069 & 0.0073 & -0.0026 & 0.1226 & 0.2324 & 0.1887 & 0.1904 & 1.0218 & 1.0310 & 1.0340 & 1.1070 \\
  200 & 0.05 & $(5,10)$ & 0.0074 & 0.0036 & 0.0107 & 0.0086 & 0.1159 & 0.1134 & 0.2214 & 0.3237 & 1.0249 & 1.0246 & 1.0448 & 1.3797 \\ \hline
  200 & 0.1 & $(0,5)$ & 0.0146 & 0.0089 & 0.0060 & 0.0030 & 0.1363 & 0.2345 & 0.1631 & 0.1898 & 1.0254 & 1.0298 & 1.0293 & 1.0527 \\
  200 & 0.1 & $(5,5)$ & 0.0166 & 0.0137 & 0.0047 & 0.0121 & 0.1278 & 0.2367 & 0.1525 & 0.2437 & 1.0385 & 1.0539 & 1.0449 & 1.3352 \\
  200 & 0.1 & $(5,10)$ & 0.0140 & 0.0053 & 0.0117 & -0.0005 & 0.1269 & 0.1355 & 0.1899 & 0.4503 & 1.0331 & 1.0522 & 1.0451 & 2.2763 \\ \hline
  200 & 0.2 & $(0,5)$ & 0.0382 & -0.0025 & 0.0039 & -0.0155 & 0.1833 & 0.1406 & 0.2778 & 0.2525 & 1.0544 & 1.0321 & 1.0520 & 1.0980 \\
  200 & 0.2 & $(5,5)$ & 0.0330 & 0.0049 & 0.0018 & 0.0001 & 0.1775 & 0.1669 & 0.2220 & 0.3259 & 1.0985 & 1.1675 & 1.0926 & 2.1352 \\
  200 & 0.2 & $(5,10)$ & 0.0337 & -0.0013 & 0.0064 & -0.0189 & 0.1907 & 0.2270 & 0.2291 & 0.6181 & 1.0957 & 1.3804 & 1.1007 & 5.4837 \\ \hline
  400 & 0.05 & $(0,5)$ & 0.0033 & 0.0071 & 0.0021 & 0.0039 & 0.0822 & 0.2162 & 0.0944 & 0.1065 & 1.0119 & 1.0207 & 1.0140 & 1.0194 \\
  400 & 0.05 & $(5,5)$ & 0.0055 & 0.0028 & 0.0026 & 0.0024 & 0.0825 & 0.0789 & 0.0962 & 0.1294 & 1.0113 & 1.0154 & 1.0148 & 1.0867 \\
  400 & 0.05 & $(5,10)$ & 0.0041 & 0.0031 & 0.0033 & 0.0089 & 0.0800 & 0.0807 & 0.1000 & 0.2409 & 1.0132 & 1.0173 & 1.0165 & 1.3312 \\ \hline
  400 & 0.1 & $(0,5)$ & 0.0012 & -0.0024 & -0.0013 & -0.0064 & 0.0872 & 0.0852 & 0.1042 & 0.1310 & 1.0122 & 1.0119 & 1.0153 & 1.0273 \\
  400 & 0.1 & $(5,5)$ & 0.0052 & -0.0007 & 0.0028 & -0.0057 & 0.0857 & 0.0912 & 0.1288 & 0.1710 & 1.0158 & 1.0433 & 1.0277 & 1.2877 \\
  400 & 0.1 & $(5,10)$ & 0.0069 & 0.0019 & 0.0073 & -0.0055 & 0.0833 & 0.0911 & 0.1721 & 0.3137 & 1.0131 & 1.0351 & 1.0299 & 2.1209 \\ \hline
  400 & 0.2 & $(0,5)$ & 0.0023 & -0.0003 & -0.0017 & 0.0019 & 0.1051 & 0.1033 & 0.1025 & 0.1779 & 1.0152 & 1.0153 & 1.0150 & 1.0465 \\
  400 & 0.2 & $(5,5)$ & 0.0134 & 0.0040 & 0.0032 & 0.0063 & 0.1099 & 0.1130 & 0.2314 & 0.2254 & 1.0487 & 1.1439 & 1.0823 & 2.0636 \\
  400 & 0.2 & $(5,10)$ & 0.0172 & -0.0030 & -0.0059 & -0.0163 & 0.1215 & 0.1501 & 0.1191 & 0.4302 & 1.0602 & 1.3277 & 1.0741 & 5.2611 \\ \hline
  \end{tabular} 
    \end{center}
    \par
    {\footnotesize \textit{Note:} L0, L1, LAD and OLS refer to local combinatorial search algorithm with \(l = 2\), \(L_1\)-regularized estimator defined in \eqref{eq:l1_prob}, the least absolute deviation estimator and the ordinary least squares estimator, respectively. Generically, bias and RMSE are calculated by 
    $R^{-1}\sum_{r=1}^R \left( \hat{ \beta}^{(r)} - \beta \right)$ and 
    $\sqrt{%
    R^{-1}\sum_{r= 1}^R \left( \hat{\beta}^{(r)} -\beta \right)^2}$, respectively, for true parameter $\beta$, its estimate $\hat{\beta}%
    ^{(r)}$ across $R = 1000$ replications. We report bias and RMSE for \(\beta _1\). 
    The prediction error is calculated by 
    \(
    \text{Prediction Error} = R^{-1} \sum_{r=1}^R \left( \frac{1}{N_t} \sum_{i=1}^{N_t} \left( \widetilde{y} _{i}^{(r)} - \widetilde{x} _{i}^{(r)\prime} \hat{\beta}^{(r)} \right)^2 \right),
    \) where \(\left\{ \widetilde{y} _{i}, \widetilde{x} ^{\prime} _{i}   \right\}_{i=1}^{N_t} \) is generated from the same DGP without outliers and \(N_{t}  = 1000\).  }
\end{sidewaystable}

\begin{sidewaystable}[htbp]
    \begin{center}
    \caption{Bias, RMSE and Prediction Error Comparison: DGP 2}
    \label{tab:est_compare_dgp2}
    \begin{tabular}{ccc|rrrr|rrrr|rrrr}
  \hline
   &  &  & \multicolumn{4}{c|}{Bias} & \multicolumn{4}{c|}{RMSE} & \multicolumn{4}{c}{Prediction Error} \\ \cline{4-15} 
  $N$ & $p$ & $\rho$ & \multicolumn{1}{c}{L0} & \multicolumn{1}{c}{L1} & \multicolumn{1}{c}{LAD} & \multicolumn{1}{c|}{OLS} & \multicolumn{1}{c}{L0} & \multicolumn{1}{c}{L1} & \multicolumn{1}{c}{LAD} & \multicolumn{1}{c|}{OLS} & \multicolumn{1}{c}{L0} & \multicolumn{1}{c}{L1} & \multicolumn{1}{c}{LAD} & \multicolumn{1}{c}{OLS} \\ \hline
  100 & 0.05 & 2 & 0.0136 & -0.0425 & -0.0213 & -0.1039 & 0.1856 & 0.4412 & 0.2999 & 0.2418 & 1.0520 & 1.1012 & 1.0764 & 1.0753 \\
  100 & 0.05 & 5 & 0.0108 & -0.0292 & -0.0293 & -0.2481 & 0.1721 & 0.3388 & 0.3421 & 0.4599 & 1.0454 & 1.0695 & 1.0981 & 1.2632 \\
  100 & 0.05 & 10 & 0.0024 & -0.0455 & -0.0375 & -0.5083 & 0.1695 & 0.5241 & 0.2712 & 0.9381 & 1.0450 & 1.1111 & 1.0725 & 2.0180 \\ \hline
  100 & 0.1 & 2 & 0.0124 & -0.0782 & -0.0647 & -0.2000 & 0.1985 & 0.2809 & 0.3104 & 0.3223 & 1.0584 & 1.0694 & 1.0793 & 1.1250 \\
  100 & 0.1 & 5 & 0.0177 & -0.1038 & -0.0633 & -0.5106 & 0.1959 & 0.4527 & 0.3108 & 0.7374 & 1.0617 & 1.0868 & 1.0951 & 1.6280 \\
  100 & 0.1 & 10 & 0.0084 & -0.1197 & -0.0665 & -0.9777 & 0.1932 & 0.3861 & 0.2889 & 1.4546 & 1.0652 & 1.1075 & 1.0973 & 3.4246 \\ \hline
  100 & 0.2 & 2 & 0.0470 & -0.1924 & -0.1443 & -0.3914 & 0.2518 & 0.5470 & 0.3400 & 0.5008 & 1.0995 & 1.1424 & 1.1261 & 1.2899 \\
  100 & 0.2 & 5 & 0.0210 & -0.2921 & -0.1563 & -1.0116 & 0.2536 & 0.4487 & 0.3351 & 1.2382 & 1.1436 & 1.1982 & 1.1369 & 2.6585 \\
  100 & 0.2 & 10 & -0.0150 & -0.4833 & -0.1652 & -1.9442 & 0.2628 & 0.6554 & 0.4466 & 2.4123 & 1.1598 & 1.5000 & 1.1400 & 7.3165 \\ \hline
  200 & 0.05 & 2 & -0.0039 & -0.0472 & -0.0373 & -0.1026 & 0.1263 & 0.1222 & 0.1418 & 0.1779 & 1.0203 & 1.0179 & 1.0231 & 1.0373 \\
  200 & 0.05 & 5 & 0.0019 & -0.0439 & -0.0357 & -0.2472 & 0.1228 & 0.2360 & 0.1610 & 0.3677 & 1.0200 & 1.0288 & 1.0287 & 1.1555 \\
  200 & 0.05 & 10 & 0.0028 & -0.0480 & -0.0289 & -0.5072 & 0.1209 & 0.1247 & 0.1675 & 0.7180 & 1.0243 & 1.0235 & 1.0350 & 1.6064 \\ \hline
  200 & 0.1 & 2 & -0.0058 & -0.0950 & -0.0619 & -0.2023 & 0.1371 & 0.2891 & 0.1853 & 0.2717 & 1.0284 & 1.0398 & 1.0369 & 1.0844 \\
  200 & 0.1 & 5 & 0.0066 & -0.0861 & -0.0538 & -0.4758 & 0.1344 & 0.2542 & 0.2812 & 0.6015 & 1.0284 & 1.0424 & 1.0614 & 1.4231 \\
  200 & 0.1 & 10 & 0.0024 & -0.1185 & -0.0633 & -1.0126 & 0.1321 & 0.1844 & 0.1817 & 1.2477 & 1.0283 & 1.0408 & 1.0515 & 2.6637 \\ \hline
  200 & 0.2 & 2 & 0.0132 & -0.1876 & -0.1343 & -0.3975 & 0.1743 & 0.2374 & 0.2047 & 0.4621 & 1.0486 & 1.0667 & 1.0520 & 1.2298 \\
  200 & 0.2 & 5 & 0.0077 & -0.2762 & -0.1411 & -1.0094 & 0.1801 & 0.3342 & 0.3001 & 1.1396 & 1.0629 & 1.1241 & 1.0809 & 2.3628 \\
  200 & 0.2 & 10 & -0.0130 & -0.4515 & -0.1441 & -2.0215 & 0.1803 & 0.5364 & 0.2658 & 2.2574 & 1.0698 & 1.3223 & 1.0785 & 6.3780 \\ \hline
  400 & 0.05 & 2 & -0.0055 & -0.0437 & -0.0278 & -0.1000 & 0.0815 & 0.2212 & 0.0907 & 0.1433 & 1.0099 & 1.0209 & 1.0117 & 1.0243 \\
  400 & 0.05 & 5 & 0.0010 & -0.0513 & -0.0274 & -0.2519 & 0.0826 & 0.0952 & 0.1322 & 0.3206 & 1.0115 & 1.0136 & 1.0192 & 1.1145 \\
  400 & 0.05 & 10 & -0.0005 & -0.0543 & -0.0338 & -0.4964 & 0.0829 & 0.0981 & 0.1257 & 0.6225 & 1.0113 & 1.0136 & 1.0172 & 1.4262 \\ \hline
  400 & 0.1 & 2 & -0.0112 & -0.0978 & -0.0607 & -0.1992 & 0.0874 & 0.1306 & 0.1100 & 0.2350 & 1.0123 & 1.0219 & 1.0171 & 1.0623 \\
  400 & 0.1 & 5 & -0.0001 & -0.1024 & -0.0568 & -0.4896 & 0.0860 & 0.1349 & 0.2014 & 0.5579 & 1.0116 & 1.0225 & 1.0339 & 1.3413 \\
  400 & 0.1 & 10 & -0.0017 & -0.1138 & -0.0648 & -0.9929 & 0.0854 & 0.1473 & 0.1400 & 1.1230 & 1.0112 & 1.0249 & 1.0237 & 2.3312 \\ \hline
  400 & 0.2 & 2 & -0.0307 & -0.2113 & -0.1373 & -0.4016 & 0.1022 & 0.2347 & 0.1827 & 0.4328 & 1.0176 & 1.0613 & 1.0381 & 1.1956 \\
  400 & 0.2 & 5 & -0.0047 & -0.2648 & -0.1350 & -1.0076 & 0.1064 & 0.2933 & 0.2762 & 1.0737 & 1.0207 & 1.0937 & 1.0692 & 2.1861 \\
  400 & 0.2 & 10 & -0.0052 & -0.4260 & -0.1478 & -2.0080 & 0.1020 & 0.4701 & 0.1815 & 2.1353 & 1.0214 & 1.2335 & 1.0410 & 5.6672 \\ \hline
\end{tabular} 
    \end{center}
    \par
    {\footnotesize \textit{Note:} L0, L1, LAD and OLS refer to local combinatorial search algorithm with \(l = 2\), \(L_1\)-regularized estimator defined in \eqref{eq:l1_prob}, the least absolute deviation estimator and the ordinary least squares estimator, respectively. Generically, bias and RMSE are calculated by 
    $R^{-1}\sum_{r=1}^R \left( \hat{ \beta}^{(r)} - \beta \right)$ and 
    $\sqrt{%
    R^{-1}\sum_{r= 1}^R \left( \hat{\beta}^{(r)} -\beta \right)^2}$, respectively, for true parameter $\beta$, its estimate $\hat{\beta}%
    ^{(r)}$ across $R = 1000$ replications. We report bias and RMSE for \(\beta _1\). 
    The prediction error is calculated by 
    \(
    \text{Prediction Error} = R^{-1} \sum_{r=1}^R \left( \frac{1}{N_t} \sum_{i=1}^{N_t} \left( \widetilde{y} _{i}^{(r)} - \widetilde{x} _{i}^{(r)\prime} \hat{\beta}^{(r)} \right)^2 \right),
    \) where \(\left\{ \widetilde{y} _{i}, \widetilde{x} ^{\prime} _{i}   \right\}_{i=1}^{N_t} \) is generated from the same DGP without outliers and \(N_{t}  = 1000\). }
\end{sidewaystable}

\begin{sidewaystable}[htbp]
    \begin{center}
    \caption{Bias, RMSE and Prediction Error Comparison: DGP 3}
    \label{tab:est_compare_dgp3}
    \begin{tabular}{ccc|rrrr|rrrr|rrrr}
  \hline
   &  &  & \multicolumn{4}{c|}{Bias} & \multicolumn{4}{c|}{RMSE} & \multicolumn{4}{c}{Prediction Error} \\ \cline{4-15} 
  $N$ & $p$ & $\rho$ & \multicolumn{1}{c}{L0} & \multicolumn{1}{c}{L1} & \multicolumn{1}{c}{LAD} & \multicolumn{1}{c|}{OLS} & \multicolumn{1}{c}{L0} & \multicolumn{1}{c}{L1} & \multicolumn{1}{c}{LAD} & \multicolumn{1}{c|}{OLS} & \multicolumn{1}{c}{L0} & \multicolumn{1}{c}{L1} & \multicolumn{1}{c}{LAD} & \multicolumn{1}{c}{OLS} \\ \hline
  100 & 0.05 & 2 & 0.0133 & 0.0374 & 0.0305 & 0.0947 & 0.1181 & 0.1216 & 0.1343 & 0.1707 & 1.0929 & 1.1016 & 1.1199 & 1.1796 \\
  100 & 0.05 & 5 & 0.0028 & 0.0372 & 0.0288 & 0.2316 & 0.1120 & 0.1214 & 0.1335 & 0.3450 & 1.1013 & 1.0985 & 1.1204 & 1.3718 \\
  100 & 0.05 & 10 & -0.0014 & 0.0400 & 0.0233 & 0.4303 & 0.1137 & 0.1283 & 0.1332 & 0.6133 & 1.1252 & 1.1627 & 1.1821 & 2.7509 \\ \hline
  100 & 0.1 & 2 & 0.0202 & 0.0912 & 0.0705 & 0.2313 & 0.1379 & 0.1571 & 0.1586 & 0.2903 & 1.2159 & 1.2227 & 1.2517 & 1.3784 \\
  100 & 0.1 & 5 & 0.0051 & 0.1177 & 0.0699 & 0.5802 & 0.1244 & 0.1799 & 0.1529 & 0.6761 & 1.1360 & 1.1734 & 1.1954 & 2.3094 \\
  100 & 0.1 & 10 & 0.0104 & 0.1829 & 0.0787 & 1.1178 & 0.1154 & 0.2515 & 0.1547 & 1.3130 & 1.1591 & 1.2737 & 1.1658 & 6.6681 \\ \hline
  100 & 0.2 & 2 & 0.0313 & 0.2201 & 0.1568 & 0.4539 & 0.1541 & 0.2676 & 0.2196 & 0.4995 & 1.1839 & 1.2791 & 1.2140 & 1.7171 \\
  100 & 0.2 & 5 & 0.0128 & 0.3870 & 0.1699 & 1.1336 & 0.1371 & 0.4517 & 0.2315 & 1.2259 & 1.1921 & 1.6585 & 1.2529 & 4.9567 \\
  100 & 0.2 & 10 & 0.0178 & 0.7165 & 0.1692 & 2.2973 & 0.1379 & 0.8299 & 0.2350 & 2.4777 & 1.1395 & 3.0481 & 1.2506 & 17.9627 \\ \hline
  200 & 0.05 & 2 & 0.0090 & 0.0459 & 0.0326 & 0.1128 & 0.0837 & 0.0911 & 0.0937 & 0.1493 & 1.1170 & 1.1022 & 1.1237 & 1.1233 \\
  200 & 0.05 & 5 & 0.0021 & 0.0481 & 0.0299 & 0.2854 & 0.0834 & 0.0992 & 0.0998 & 0.3453 & 1.0353 & 1.0283 & 1.0434 & 1.3253 \\
  200 & 0.05 & 10 & -0.0005 & 0.0523 & 0.0292 & 0.5746 & 0.0736 & 0.0962 & 0.0924 & 0.6796 & 1.0445 & 1.0538 & 1.0657 & 2.2574 \\ \hline
  200 & 0.1 & 2 & 0.0122 & 0.0927 & 0.0650 & 0.2258 & 0.0927 & 0.1263 & 0.1145 & 0.2563 & 1.0816 & 1.0692 & 1.0776 & 1.1800 \\
  200 & 0.1 & 5 & 0.0066 & 0.1106 & 0.0712 & 0.5705 & 0.0815 & 0.1417 & 0.1155 & 0.6230 & 1.1258 & 1.1599 & 1.1389 & 2.1070 \\
  200 & 0.1 & 10 & -0.0008 & 0.1536 & 0.0654 & 1.1162 & 0.0822 & 0.1931 & 0.1145 & 1.2159 & 1.0499 & 1.1218 & 1.0782 & 4.7269 \\ \hline
  200 & 0.2 & 2 & 0.0336 & 0.2183 & 0.1547 & 0.4549 & 0.1168 & 0.2405 & 0.1856 & 0.4767 & 1.0496 & 1.0919 & 1.0602 & 1.4486 \\
  200 & 0.2 & 5 & 0.0125 & 0.3576 & 0.1620 & 1.1350 & 0.0987 & 0.3921 & 0.1970 & 1.1828 & 1.0410 & 1.4231 & 1.1345 & 4.2417 \\
  200 & 0.2 & 10 & 0.0075 & 0.6543 & 0.1656 & 2.2731 & 0.0988 & 0.7136 & 0.2011 & 2.3656 & 1.0560 & 2.2161 & 1.1482 & 13.1995 \\ \hline
  400 & 0.05 & 2 & 0.0084 & 0.0527 & 0.0319 & 0.1139 & 0.0575 & 0.0760 & 0.0680 & 0.1335 & 1.0229 & 1.0125 & 1.0149 & 1.0271 \\
  400 & 0.05 & 5 & 0.0033 & 0.0571 & 0.0337 & 0.2849 & 0.0544 & 0.0800 & 0.0712 & 0.3138 & 1.0249 & 1.0412 & 1.0431 & 1.3004 \\
  400 & 0.05 & 10 & 0.0023 & 0.0568 & 0.0334 & 0.5652 & 0.0551 & 0.0812 & 0.0722 & 0.6180 & 1.0471 & 1.0697 & 1.0608 & 2.2000 \\ \hline
  400 & 0.1 & 2 & 0.0136 & 0.1077 & 0.0675 & 0.2276 & 0.0660 & 0.1241 & 0.0949 & 0.2433 & 1.0291 & 1.0416 & 1.0408 & 1.1401 \\
  400 & 0.1 & 5 & 0.0062 & 0.1134 & 0.0697 & 0.5633 & 0.0597 & 0.1299 & 0.0945 & 0.5890 & 1.0542 & 1.0938 & 1.0799 & 1.9397 \\
  400 & 0.1 & 10 & 0.0011 & 0.1469 & 0.0691 & 1.1311 & 0.0583 & 0.1676 & 0.0971 & 1.1816 & 1.1474 & 1.1870 & 1.1448 & 4.3749 \\ \hline
  400 & 0.2 & 2 & 0.0243 & 0.2288 & 0.1505 & 0.4515 & 0.0795 & 0.2405 & 0.1671 & 0.4632 & 1.0798 & 1.1893 & 1.1186 & 1.5341 \\
  400 & 0.2 & 5 & 0.0057 & 0.3392 & 0.1541 & 1.1227 & 0.0658 & 0.3560 & 0.1717 & 1.1470 & 0.9963 & 1.2354 & 1.0458 & 3.7024 \\
  400 & 0.2 & 10 & 0.0053 & 0.6241 & 0.1604 & 2.2530 & 0.0656 & 0.6515 & 0.1781 & 2.2978 & 1.0662 & 1.9160 & 1.1003 & 11.8648 \\ \hline
  \end{tabular} 
    \end{center}
    \par
    {\footnotesize \textit{Note:} L0, L1, LAD and OLS refer to local combinatorial search algorithm with \(l = 2\), \(L_1\)-regularized estimator defined in \eqref{eq:l1_prob}, the least absolute deviation estimator and the ordinary least squares estimator, respectively. Generically, bias and RMSE are calculated by 
    $R^{-1}\sum_{r=1}^R \left( \hat{ \beta}^{(r)} - \beta \right)$ and 
    $\sqrt{%
    R^{-1}\sum_{r= 1}^R \left( \hat{\beta}^{(r)} -\beta \right)^2}$, respectively, for true parameter $\beta$, its estimate $\hat{\beta}%
    ^{(r)}$ across $R = 1000$ replications. We report bias and RMSE for \(\beta _1\). 
    The prediction error is calculated by 
    \(
    \text{Prediction Error} = R^{-1} \sum_{r=1}^R \left( \frac{1}{N_t} \sum_{i=1}^{N_t} \left( \widetilde{y} _{i}^{(r)} - \widetilde{x} _{i}^{(r)\prime} \hat{\beta}^{(r)} \right)^2 \right),
    \) where \(\left\{ \widetilde{y} _{i}, \widetilde{x} ^{\prime} _{i}   \right\}_{i=1}^{N_t} \) is generated from the same DGP without outliers and \(N_{t}  = 1000\). }
\end{sidewaystable}

\section{Empirical Illustration: Stock Return Predictability\label{sec:empirical}}
 
Linear predictive regression models have been extensively studied for stock return forecasting. 
For instance, \citet{koo2020high,lee2022lasso} and others have applied linear predictive regression to the \citet{welch2008comprehensive} dataset to investigate stock return predictability. 
Financial time series data often exhibit instability, particularly during periods such as the financial crisis. Including these periods in estimation and forecasting can lead to varying results. As an illustration, we apply data-driven \(L_0\) and \(L_1\)-regularized robust estimation methods to the \citet{welch2008comprehensive} dataset\footnote{Retrieved from \url{http://www.hec.unil.ch/agoyal/}} and evaluate the out-of-sample stock return prediction performance.

We use monthly data from January 1990 to December 2023, which covers the dot-com bubble period, the 2007-09 financial crisis, and the Covid-19 pandemic. The dependent variable, excess return, is defined as the difference between the continuously compounded return on the S\&P 500 index and the three-month Treasury bill rate, computed by 
\[
\text{ExReturn}_{i} = \log\left(\text{index}_{i}/\text{index}_{i-1}\right) - \log\left(1 + \text{tbl}_{i}/12\right).
\]
Twelve financial and macroeconomic variables\footnote{
    The predictors are 
    \texttt{Dividend Price Ratio}, \texttt{Dividend Yield}, \texttt{Earning Price Ratio}, \texttt{Term Spread}, \texttt{Default Yield Spread}, \texttt{Default Return Spread}, \texttt{Book-to-Market Ratio}, \texttt{Treasury Bill Rates}, \texttt{Long-Term Return}, \texttt{Net Equity Expansion}, \texttt{Stock Variance}, \texttt{Inflation}. Table \ref{tab:AR(1)-Coefficients-of} in the appendix summarizes the detailed description of each variable.
}, denoted by \(x_{i}\), are included in the model as predictors. We apply the \(L_0\) and \(L_1\)-regularized methods to estimate the model,
\[
    \text{ExReturn}_{i+1} = \beta_0 + x_{i}^{\prime}\beta_1 + \alpha_{i} + u_{i+1},
\]
for \(i \in \left[ N \right]\), and construct one-month-ahead forecasts \(\hat{f}_{N+1} = \hat{\beta}_0 + \hat{x}_{N}^{\prime}\hat{\beta}_1\) recursively with a 10-year rolling window for each month from January 2000 to December 2023. As in Monte Carlo experiments, LAD and OLS are included as benchmarks.

Table \ref{tab:welch_goyal_estimation_results} compares the mean prediction squared error (MPSE) among different methods across various forecasting periods. We consider the forecast period from January 2000 to December 2023, and six subperiods based on the start and end dates of the dot-com bubble burst period (Mar. 2000 to Nov. 2000), the financial crisis (Dec. 2007 to Jun. 2009) and the Covid-19 shocks (Feb. 2020 to Apr. 2020). The results demonstrate that \(L_0\)-regularized method outperforms alternative methods with a margin in all subperiods except for the financial crisis period, which illuminates the forecast accuracy gain from the robustness to potentially endogenous outliers.

\begin{table}[htbp]
    \begin{center}
     \caption{Mean Prediction Squared Error (MPSE) for Monthly Excess Return of S\&P 500 Index}
        \label{tab:welch_goyal_estimation_results}
        \begin{tabular}{c|cccc}
            \hline
            Forecast Period & \multicolumn{1}{c}{L0} & \multicolumn{1}{c}{L1} & \multicolumn{1}{c}{LAD} & \multicolumn{1}{c}{OLS} \\ \hline
            All periods & \multirow{2}{*}{\textbf{0.00306}} & \multirow{2}{*}{0.00325} & \multirow{2}{*}{0.00400} & \multirow{2}{*}{0.00341} \\
            (\textit{\small Jan. 2000 to Dec. 2023}) &  &  &  &  \\ \hline
            Dot-com Bubble Burst & \multirow{2}{*}{\textbf{0.00409}} & \multirow{2}{*}{0.00455} & \multirow{2}{*}{0.00450} & \multirow{2}{*}{0.00474} \\
            (\textit{\small Mar. 2000 to Nov. 2000}) &  &  &  &  \\ \hline
            Dot-com - Financial Crisis & \multirow{2}{*}{\textbf{0.00169}} & \multirow{2}{*}{0.00187} & \multirow{2}{*}{0.00207} & \multirow{2}{*}{0.00197} \\
            (\textit{\small Mar. 2001 to Jun. 2009}) &  &  &  &  \\ \hline
            Financial Crisis & \multirow{2}{*}{0.00937} & \multirow{2}{*}{\textbf{0.00906}} & \multirow{2}{*}{0.01389} & \multirow{2}{*}{0.01082} \\
            (\textit{\small Dec. 2007 to Jun. 2009}) &  &  &  &  \\ \hline
            Financial Crisis - Covid & \multicolumn{1}{c}{\multirow{2}{*}{\textbf{0.00128}}} & \multicolumn{1}{c}{\multirow{2}{*}{0.00134}} & \multicolumn{1}{c}{\multirow{2}{*}{0.00148}} & \multicolumn{1}{c}{\multirow{2}{*}{0.00145}} \\
            (\textit{\small Dec. 2007 to Feb. 2020}) & \multicolumn{1}{c}{} & \multicolumn{1}{c}{} & \multicolumn{1}{c}{} & \multicolumn{1}{c}{} \\ \hline
            Covid & \multicolumn{1}{c}{\multirow{2}{*}{\textbf{0.06758}}} & \multicolumn{1}{c}{\multirow{2}{*}{0.07493}} & \multicolumn{1}{c}{\multirow{2}{*}{0.11073}} & \multicolumn{1}{c}{\multirow{2}{*}{0.07988}} \\
            (\textit{\small Feb. 2020 to Apr. 2020}) & \multicolumn{1}{c}{} & \multicolumn{1}{c}{} & \multicolumn{1}{c}{} & \multicolumn{1}{c}{} \\ \hline
            Post-Covid & \multicolumn{1}{c}{\multirow{2}{*}{\textbf{0.00332}}} & \multicolumn{1}{c}{\multirow{2}{*}{0.00378}} & \multicolumn{1}{c}{\multirow{2}{*}{0.00335}} & \multicolumn{1}{c}{\multirow{2}{*}{0.00340}} \\
            (\textit{\small Apr. 2020 to Dec. 2023}) & \multicolumn{1}{c}{} & \multicolumn{1}{c}{} & \multicolumn{1}{c}{} & \multicolumn{1}{c}{} \\ \hline
    \end{tabular}   
    \end{center}
    \par
    {\footnotesize \textit{Notes:} The mean prediction squared error (MPSE) is calculated by averaging the square forecasting loss over the corresponding periods. The date below each period refers to the forecast period. L0, L1, LAD and OLS refer to local combinatorial search algorithm with \(l = 2\), \(L_1\)-regularized estimator defined in \eqref{eq:l1_prob}, the least absolute deviation estimator and the ordinary least squares estimator, respectively. The smallest MPSE in each period is labeled in bold.} 
\end{table}

Additionally, Figure \ref{fig:combined_outlier_detection} illustrates the outliers detected (\(\hat{\alpha}_{i} \neq 0\)) for each rolling window using \(L_0\) and \(L_1\)-regularized methods. In the grid plot, each row corresponds to a different forecast period, and each column corresponds to a period that may be included in the estimation rolling window. Within each row, the highlighted cells represent the estimation window for the corresponding forecast target period. A cell is labeled red if the corresponding period is detected as an outlier in the rolling window by both methods, blue if both methods detect this period as an inlier, purple if only the \(L_0\)-regularized method detects this period as an outlier, and yellow if only the \(L_1\)-regularized method detects this period as an outlier. As shown in the figure, the \(L_0\) method consistently detects periods around the dotcom bubble, financial crisis, and Covid-19 shocks across rolling windows. In general, the \(L_1\) method detects a similar pattern while also labeling some outlier periods outside the concentrated outlier regions.

\begin{sidewaysfigure}[htbp]
    \centering
    \caption{Outlier Detection with \(L_0\) and \(L_1\)-regularized Methods}
    \label{fig:combined_outlier_detection}
    \includegraphics[width=\textwidth]{./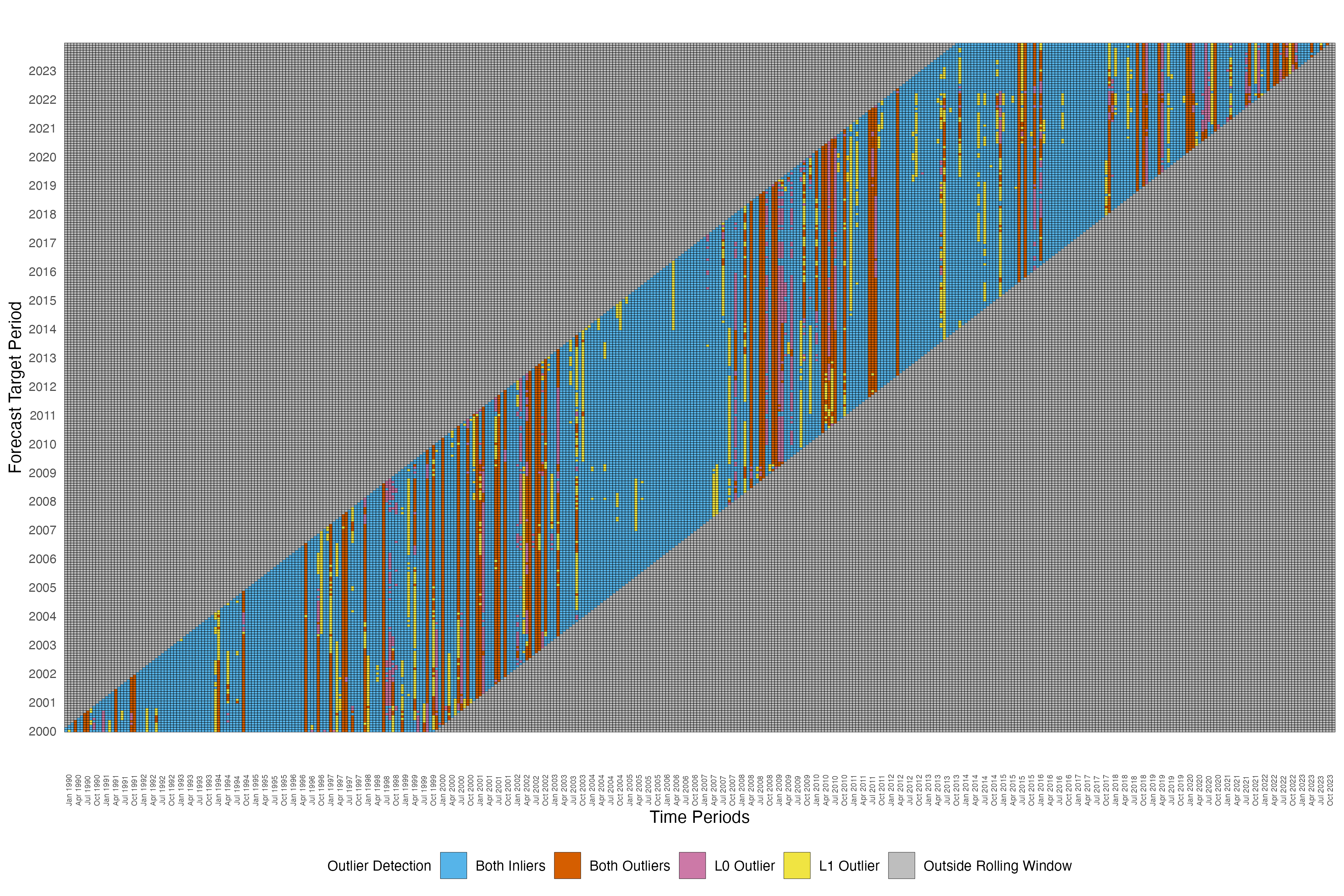}
\end{sidewaysfigure}

\section{Conclusion\label{sec:conclusion}}
This paper addresses the robust estimation of linear regression models in the presence of potentially endogenous outliers. We demonstrate that existing $L_1$-regularized estimation methods exhibit significant bias when outliers are endogenous and develop $L_0$-regularized estimation methods to overcome this issue. We propose systematic heuristic algorithms, notably an iterative hard-thresholding algorithm and a local combinatorial search refinement, to efficiently solve the combinatorial optimization problem of the \(L_0\)-regularized estimation. The properties of the \(L_0\) and \(L_1\)-regularized methods are examined through Monte Carlo simulations. We illustrate the practical value of our method with an empirical application to stock return forecasting.

\newpage
\appendix

\begin{center}
{\huge \bf Appendix}
\end{center}

\setcounter{table}{0} \renewcommand{\thetable}{A.\arabic{table}} 
\setcounter{section}{0} \renewcommand{\thesection}{A.\arabic{section}} 
\setcounter{figure}{0} \renewcommand{\thefigure}{A.\arabic{figure}}
\setcounter{remark}{0} \renewcommand{\theremark}{A.\arabic{remark}}
\setcounter{lemma}{0}
\renewcommand{\thelemma}{A.\arabic{lemma}}
\setcounter{equation}{0}
\renewcommand{\theequation}{A.\arabic{equation}}


\section{Proofs\label{app:proof}}

\begin{proof}[Proof of Proposition \ref{prop:breakdown_point}]
    Note that \eqref{eq:L0_prob} is equivalent to
    \begin{equation*}
        \Theta_k \left( Y, X \right) = \min_{
            \mathcal{I}\subset \mathcal{N}, 
            \left\vert \mathcal{I} \right\vert \geq  N - k
        } \min_{\beta} 
        \sum_{i\in \mathcal{I}} \left( y_i - x_i^\prime \beta \right)^2.
    \end{equation*}
    Denote $$\Theta \left( Y, X \right) = \Theta_{\hat{k}}\left( Y, X \right)$$ where 
    $$
    \hat{k}  = \arg\min_{1\leq k\leq\lfloor N/2\rfloor} 
        \mathrm{BIC}^\ast\left( k; \left( Y, X \right) \right) 
        = \arg\min_{1\leq k\leq\lfloor N/2\rfloor} 
        N \log \left( \frac{\Theta_k\left( Y, X \right)}{N} \right)
        + k\log\left( N \right).
    $$
    Note that $B\left( T; \left( Y, X \right) \right) = B\left( \Theta; \left( Y, X \right) \right)$ since $\Theta\left( Y, X \right)$ is a quadratic function of $T\left( Y, X \right)$, so we can focus on $\Theta\left( Y, X \right)$ as proceed.

    We first show $B\left( \Theta; \left( Y, X \right) \right) > \lfloor N / 2\rfloor$ by contradiction. Suppose $k_0 = \lfloor N / 2\rfloor$, and the set of uncontaminated sample is $\mathcal{I}_0$ with $\left\vert \mathcal{I}_0 \right\vert = N - \lfloor N / 2\rfloor$. The objective value with sample in $\mathcal{I}_0$ satisfies
    \begin{equation*}
        \Theta_{\lfloor N/2 \rfloor} \left( \tilde{Y}_{(k_0)}, \tilde{X}_{(k_0)}\right) \leq 
        \min_{\beta} \sum_{i\in \mathcal{I}_0} \left( y_i - x_i^\prime \beta \right)^2 < \infty,
    \end{equation*}
    then
    $
        \sup_{\tilde{Y}_{(k_0)}, \tilde{X}_{(k_0)}} \Theta_{\lfloor N/2 \rfloor} \left(\tilde{Y}_{(k_0)}, \tilde{X}_{(k_0)} \right) < \infty,
    $
    and 
    \begin{equation}
        \sup_{\tilde{Y}_{(k_0)}, \tilde{X}_{(k_0)}} \mathrm{BIC}^\ast\left( \lfloor N/2\rfloor; \left( \tilde{Y}_{(k_0)}, \tilde{X}_{(k_0)} \right)\right) < \infty.
        \label{eq:finite_bic}
    \end{equation}
    Suppose
    $$
        \sup_{\tilde{Y}_{(k_0)}, \tilde{X}_{(k_0)}} \left\Vert 
            \Theta \left( Y, X \right) - \Theta \left( \tilde{Y}_{(k_0)}, \tilde{X}_{(k_0)} \right) 
        \right\Vert= \infty,
    $$
    then $$\sup_{\tilde{Y}_{(k_0)}, \tilde{X}_{(k_0)}} \Theta \left( \tilde{Y}_{(k_0)}, \tilde{X}_{(k_0)} \right) = \sup_{\tilde{Y}_{(k_0)}, \tilde{X}_{(k_0)}} \Theta_{\hat{k}} \left( \tilde{Y}_{(k_0)}, \tilde{X}_{(k_0)} \right) = \infty.$$
    This implies \begin{equation*}
        \sup_{\tilde{Y}_{(k_0)}, \tilde{X}_{(k_0)}} \mathrm{BIC}^\ast \left( k, \left( \tilde{Y}_{(k_0)}, \tilde{X}_{(k_0)} \right) \right) = \infty,
    \end{equation*}
    for $1\leq k\leq \lfloor N/2 \rfloor$, which contradicts to \eqref{eq:finite_bic}. As a result, $B\left( \Theta; \left( Y, X \right) \right) > \lfloor N / 2\rfloor / N$.

    Suppose $k_0 = \lfloor N / 2\rfloor + 1$, then $\Theta\left(\tilde{Y}_{(k_0)}, \tilde{X}_{(k_0)} \right) = \sum_{i\in \mathcal{I}^\ast} \left( y_i - x_i^\prime \beta \right)^2$ for some $\mathcal{I}^\ast \subset \mathcal{N}$ with at least one arbitrary outlier, which leads to $ \sup_{\tilde{Y}_{(k_0)}, \tilde{X}_{(k_0)}} \Theta\left(\tilde{Y}_{(k_0)}, \tilde{X}_{(k_0)} \right) = \infty$, and then $$
    \sup_{\tilde{Y}_{(k_0)}, \tilde{X}_{(k_0)}} \left\Vert 
            \Theta \left( Y, X \right) - \Theta \left( \tilde{Y}_{(k_0)}, \tilde{X}_{(k_0)} \right) 
        \right\Vert= \infty.
    $$ Then $B\left( \Theta; \left( Y, X \right) \right) \leq\frac{ \lfloor N / 2\rfloor + 1}{N}$, which completes the proof.
\end{proof}

\section{Data Description\label{app:data_description}}

Table \ref{tab:AR(1)-Coefficients-of} summarizes all the variables
included and the first-order autocorrelation coefficient of each variable
estimated for the whole sample period. As shown in the table, the
excess return has an estimated first-order autocorrelation coefficient
of $0.0444$, which indicates little persistence, similar to the default
return spread (dfr), the long-term return of government bonds (ltr),
stock variance (svar) and inflation (infl). The other predictors show
high persistence, with AR(1) coefficients greater than 0.95.

\begin{table}[htbp]
\caption{\label{tab:AR(1)-Coefficients-of}Variables and AR(1) Coefficients}

\medskip{}

\centering
\small
\begin{tabular}{l>{\raggedright}p{11cm}r}
\toprule 
Predictor & Description & AR(1) Coef\tabularnewline
\midrule
ExReturn & {\small{}Excess Return: the difference between the continuously compounded
return on the S\&P 500 index and the three-month Treasury Bill rate} & 0.0444 \tabularnewline
dp & {\small{}Dividend Price Ratio: the difference between the log of the
12-month moving sum dividends and the log of the S\&P 500 index} & 0.9941\tabularnewline
dy & {\small{}Dividend Yield: the difference between the log of the 12-month
moving sum dividends and the log of lagged the S\&P 500 index} & 0.9941\tabularnewline
ep & {\small{}Earning Price Ratio: the difference between the log of the
12-month moving sum earnings and the log of the S\&P 500 index} & 0.9904\tabularnewline
tms & {\small{}Term Spread: the difference between the long-term government
bond yield and the Treasury Bill rate} & 0.9576\tabularnewline
dfy & {\small{}Default Yield Spread: the difference between Moody\textquoteright s
BAA- and AAA-rated corporate bond yields} & 0.9717\tabularnewline
dfr & {\small{}Default Return Spread: the difference between the returns
of long-term corporate bonds and long-term government bonds} & -0.0735\tabularnewline
bm & {\small{}Book-to-Market Ratio: the ratio of the book value to market
value for the Dow Jones Industrial Average} & 0.9927\tabularnewline
tbl & {\small{}Treasury Bill Rates: the 3-month Treasury Bill rates} & 0.9905\tabularnewline
ltr & {\small{}Long-Term Return: the rate of returns of long-term government
bonds} & 0.0500\tabularnewline
ntis & {\small{}Net Equity Expansion: the ratio of the 12-month moving sums
of net issues by NYSE listed stocks over the total end-of-year market
capitalization of NYSE stocks} & 0.9778\tabularnewline
svar & {\small{}Stock Variance: the sum of the squared daily returns on the
S\&P 500 index} & 0.4714\tabularnewline
infl & {\small{}Inflation: the log growth of the Consumer Price Index (all
urban consumers)} & 0.4819\tabularnewline
\bottomrule
\end{tabular}
\end{table}

\clearpage
\singlespacing
\bibliographystyle{chicago}
\bibliography{ref}




\end{document}